\pdfoutput=1
\documentclass[11pt]{article}
\usepackage[margin=1in]{geometry}
\usepackage{amsmath,amssymb,amsthm}
\usepackage{graphicx}
\usepackage{booktabs}
\usepackage{hyperref}

\newtheorem{theorem}{Theorem}
\newtheorem{lemma}{Lemma}
\newtheorem{proposition}{Proposition}
\newtheorem{corollary}{Corollary}
\theoremstyle{definition}
\newtheorem{definition}{Definition}

\theoremstyle{remark}
\newtheorem{remark}{Remark}

\title{\textbf{Mesh Inference}\\[3pt]
\large A Formal Model of Collective Inference Without a Center}
\author{Hongwei Xu\\ SYM.BOT\\ \texttt{hongwei@sym.bot}}
\date{June 2026}

\begin{document}
\maketitle

\begin{abstract}
\noindent
We present a formal model of \emph{mesh inference}: how a population of independent agents, each
holding private state and exchanging only admitted, typed observations, derives a conclusion none of
them holds alone, with no central coordinator and no agent exposed. No agent shares weights, gradients,
or hidden state, and the agents may span different teams, networks, and organizations. Motivated by the
observation that asking a model is energy-minimizing inference, we model the mesh as a coupled free
energy that each agent relaxes locally. We show that a single admission/emission policy governs three
properties. First, mesh inference converges to a unique answer for any admission, symmetric or not,
because the coupling is always an M-matrix. Second, it is identification-complete: it derives the
centralized optimum exactly when the contributing views are carrier-connected. Third, it is
observation-only: no node transmits its internals, and confidentiality is the dual of identification---the
same rank condition that recovers collective knowledge bounds the recovery of private knowledge.
Content-addressed lineage is the only global side-channel. In the linear--Gaussian regime every derived
answer is determined, hence equal to the centralized optimum, at $O(\mathrm{diam}^2)$ latency, the
measured price of removing the center. One
such derivation is one turn of a center-free learning loop, which we formalize as architecture rather
than prove. The open problem we state is when asking improves the collective rather than corrupting it:
whether the non-linear closure derives an upgraded answer or a confident error. To our knowledge, this is the first formal
characterization of when a center-free, observation-only mesh recovers the centralized optimum.
\end{abstract}

\section{Introduction}
\label{sec:intro}
A growing body of agent protocols and ``internet of agents'' infrastructure lets autonomous agents
communicate. The harder question is whether a population can reason together: derive a conclusion that
the members jointly determine but none holds alone. We study this on a substrate where each agent runs
on its own device, owns its data, and never emits its weights, gradients, or hidden state. We call the
operation \emph{mesh inference}. It is not recall of a stored answer. It is the derivation of a
conclusion that no member could reach alone, composed from the members' partial views, computed with no
central coordinator and no agent exposed. The agents may belong to different teams, networks, and
organizations. The conclusion is real, since it lives in the union of the members' views, but reaching
it through purely local interaction is not automatic. This paper characterizes the conditions under
which it succeeds.

Mechanically, this reasoning is energy relaxation. Asking a model is, formally, energy-minimizing
inference: modern associative memories store patterns as the minima of an energy whose relaxation
dynamics is exactly transformer attention~\cite{hopfield}. We adopt the same view. A question clamps
coordinates of an energy, the dynamics relaxes, and the freed coordinates at equilibrium are the
answer. We write the primitive that drives one such relaxation $\mathrm{ask}(\cdot)$. The relaxation is
the inference, and its fixed point is the answer. The remaining question is which sovereign substrates
support it across a whole population. The answer is a condition on one object: the policy by which
agents admit and re-emit observations.

The full picture is a loop. An agent runs mesh inference and applies the result. It then learns from
what the application teaches and feeds that back, so the next inference, and the next agent with a
similar problem, draws on more. This paper proves one turn of that loop: the collective's answer is
real, determined, and computed without exposing any agent. It formalizes the loop that the turn sits
inside. And it identifies the open problem: whether, as the loop generalizes a learned answer to a new
problem, the derived answer can be trusted.

Standard distributed-intelligence designs place a center at exactly this point. Federated
learning~\cite{fedavg} keeps an aggregator. The dominant multi-agent paradigm is centralized training
with decentralized execution~\cite{maddpg}. Decentralized SGD removes the parameter server but still
optimizes a single shared model~\cite{dsgd}. Retrieval-augmented models keep a central
index~\cite{rag}. For an architecture whose value is per-node sovereignty~\cite{meshcognition}, that
center is the object to remove. Removing it at inference time, not just at execution time, is the bar.

That collective capability is the competitive layer of multi-agent systems is no longer speculative. In
2026 the largest vendors moved to own it. OpenAI's Frontier binds an enterprise's agents to a shared
knowledge layer over its CRM, ticketing, and data warehouses~\cite{frontier}. Microsoft's Agent
Framework supplies graph-based orchestration that routes a shared thread among agents~\cite{msagent}.
Teradata folds multimodal embeddings into a single governed vector store~\cite{teradatavs}. A further
class markets the property head-on: \emph{inference-mesh} and agentic-mesh orchestration
platforms---kamiwaza's Inference Mesh the on-the-nose example---place compute beside the data and
advertise answers no single node could produce, yet by their own account route each request to the
relevant nodes and pool the outputs at a synthesizer that merges and reconciles them~\cite{kamiwaza}.
Each of these systems works, and each keeps a center: the enterprise platforms because they live inside
one organization's trust boundary, where a coordinator holds the data and a single owner hosts the
index; the mesh-orchestration platforms because the synthesizer that routes and pools node outputs is
itself a center, and the point at which those outputs are aggregated in the clear, wherever the nodes
physically sit. In principle such systems perform
collective inference. What they do not do is perform it without a center. Mesh inference addresses the
regime that this boundary excludes: an \emph{internet of agents}~\cite{agentproto}, where inference
runs across teams, organizations, and networks, among parties that will not pool data, expose models,
or appoint a host. There a center is not available, and sovereignty becomes the enabling condition.

The question is no longer whether agents can be orchestrated, but whether a population that shares nothing
beyond admitted, typed observations can still answer collectively. This is not decentralization for its
own sake: in the regime we target the center is not a component one declines but a host no participant
will accept, so a center-free derivation is not the slower path to the answer but the only collective
inference the setting admits. And the property it must deliver---that the answer is the real collective
one, and that no member is exposed in reaching it---is a guarantee, which a deployment can assert but
only a formal model can establish: when it holds, how it fails, and the single policy that decides both.
That is why this operation must be proved, not merely built---and proving it is this paper.

\paragraph{Contributions.}
We show that a sovereign, observation-only mesh can infer what its population knows collectively: the
answer the members jointly determine but none holds alone. We characterize exactly when that inference
equals the centralized optimum. One such inference is one turn of a center-free learning loop, which we
also formalize. The contributions, tagged \textnormal{[proven]} / \textnormal{[framework]} /
\textnormal{[open]}:
\begin{enumerate}
\item \textnormal{[proven]} \emph{A formal model of mesh inference} (\S\ref{sec:model}), built from
first principles: the inference step $\mathrm{ask}(\cdot)$ is clamp-and-relax on a local free energy,
which is energy-based equilibrium inference made distributed and sovereign, with the
\emph{admission/emission policy} promoted to the single characterized object.
\item \textnormal{[proven]} \emph{One policy governs three properties}
(\S\ref{sec:a}--\ref{sec:unified}): one unconditional, and two that are a single
rank-and-connectivity condition read with opposite signs. Mesh inference is
\emph{convergent} unconditionally: the coupling is always an M-matrix, so it
terminates at a unique answer for any admission, symmetric or not. The other two are
the dual pair. It is \emph{identification-complete}: it derives the answer no member
holds, the centralized optimum, iff the policy satisfies provenance-aware admission
and non-selective source-novel forwarding, with carrier-graph connectivity the exact
criterion. And it is \emph{observation-only} by the same condition sign-flipped: no
node transmits weights, gradients, or raw state, and a private state stays
confidential exactly where the adversary's probes fail to span it---the
rank-and-connectivity that lets the collective recover its answer is what bounds what
an adversary recovers. All three reduce to one admission/emission policy, with
content-addressed lineage the single global side-channel, and the member's safety
falls out of the very condition that gives the collective its answer. \S\ref{sec:cost}
gives the honest cost, a compression--identification--rate frontier dialed by anchor
budget.
\item \textnormal{[framework]} \emph{The learning loop} (\S\ref{sec:loop}): the proven turn is the
inference half of a cycle (ask, apply, learn from the world, feed back) in which the collective closure
grows by empirical feedback. We formalize the dynamics as architecture, not theorems, bounded by the
principle that the mesh manufactures no information: it accumulates only what agents learn by acting.
\item \textnormal{[open]} \emph{The central open problem} (\S\ref{sec:scope}): \emph{when asking
improves the collective rather than corrupting it}. This is generalization correctness in the
non-linear closure, where an upgraded answer to a similar problem can be true or a confident error.
\end{enumerate}
The ingredients are classical and not ours. The relaxation that drives each step is energy-based
equilibrium inference in the sense of deep equilibrium models and equilibrium
propagation~\cite{bai,scellier,hopfield,densemem}; that a population minimizing local free energy
performs inference no single member can is the multi-agent free-energy
principle~\cite{friston,aifcc,flock,ecosystems}; and the construction further rests on predictive
coding~\cite{raoballard}, potential games~\cite{potential}, consensus, synchronization, and
distributed estimation~\cite{olfatisaber,kuramoto,acebron,consinnov}, graph signal
processing~\cite{gsp}. What is not classical is
the object itself. Communication between sovereign agents is handled, by interoperability
protocols~\cite{agentproto} and by federated retrieval~\cite{fdrag,cfedrag,fedragsurvey}. What has had
no formal model is the inference: how a sovereign mesh reasons, deriving what the population jointly
determines but no member holds, across organizations, with no center and no member exposed, and whether
that reasoning is sound. Mesh Cognition~\cite{meshcognition} names the architecture. This paper supplies
what it lacked, and what, to our knowledge, no prior work provides: a formal characterization of mesh inference
with mathematical guarantees (convergence, identification-completeness, and confidentiality) that reduce
all three to one characterized object, the admission/emission policy under which the mesh \emph{derives
what the population knows collectively but no member holds}, reliably and without exposing any member.

\section{Related Work}
\label{sec:related}

\paragraph{Equilibrium-based inference.}
The inference step $\mathrm{ask}(\cdot)$ is, formally, energy-based equilibrium inference: clamp a
subset of coordinates, relax to a fixed point, read the rest. This is the setting of deep
equilibrium models~\cite{bai}, which define an output as the implicit fixed point of a single layer,
and of equilibrium propagation~\cite{scellier}, whose inference phase clamps inputs and settles an
energy to its minimum, with the continuous Hopfield network~\cite{hopfield} as prototype and dense
associative memory~\cite{densemem} as its high-capacity generalization. These fix the energy and the
solver on a single device. Our contribution is to distribute the same operation across sovereign
nodes that never share state, and to characterize when the distributed relaxation still returns the
centralized fixed point.

\paragraph{Collective free-energy minimization.}
That a population of agents each minimizing a local free energy can perform inference no single agent
can is not new. It is the multi-agent reading of the free-energy principle~\cite{friston}. Because free
energy is extensive, agents minimizing their local free energy jointly minimize the
ensemble's~\cite{aifcc}, and recent work exhibits collectives holding information that exceeds any
member's internal model~\cite{flock}. We take this principle as given, not as a result. That literature
has begun to supply formal guarantees, such as posterior consistency and no-regret for
free-energy-minimizing agents~\cite{curiosity} and formal treatments of strategic multi-agent
inference~\cite{factorised}. But these are in the single-agent and game-theoretic settings, not the
distributed, observation-only regime we treat. Our characterization is of a different object: the
conditions on the admission/emission policy under which the local relaxation is convergent,
identification-complete, and observation-only, proved in the linear--Gaussian regime with explicit
wire-level constraints.

\paragraph{Collective intelligence by sharing belief.}
A parallel line makes collective intelligence depend on agents accessing one another's internal states.
Its programmatic statement is the active-inference ``ecosystems of intelligence'' vision~\cite{ecosystems}:
an ecology that attains shared intelligence by sharing beliefs through a common modeling language and
transaction protocol. Two models instantiate the mechanism: \cite{aifci} endows active-inference agents,
in stages, with Theory of Mind and goal alignment and watches system-level performance emerge in
simulation; \cite{tomteams} builds a network of Bayesian agents that infer teammates' beliefs from
observed communication and lifts a human team's performance above what its members reach unaided. In all
three, the collective gain is carried by internals crossing the node boundary---an agent modeling, or
receiving, another's beliefs. Our regime forbids exactly that. Mesh inference is observation-only
(\S\ref{sec:c}, Prop.~\ref{prop:moat}): no node represents, requests, or emits another's internals;
confidentiality is the dual of identification; and the collective optimum is reached by admitting and
re-emitting typed observations rather than by sharing belief. The Theory-of-Mind and belief-sharing
apparatus these works treat as the enabling capability is, in the linear--Gaussian regime, not required
for identification-completeness. They also differ in kind---\cite{ecosystems} a vision, \cite{aifci} a
simulation, \cite{tomteams} a human-subject study---and none states the admission/emission conditions
under which a distributed relaxation returns the centralized optimum, which is the object we characterize.

\paragraph{Federated retrieval.}
The applied analogue is federated retrieval-augmented generation, which answers queries over knowledge
fragmented across devices without sharing raw data~\cite{fdrag,cfedrag,fedragsurvey}. FD-RAG~\cite{fdrag}
targets multi-source queries over fragmented corpora, which is our setting, yet, like the rest of this
line, it retains a coordinator that aggregates retrieved context. Our $\mathrm{ask}(\cdot)$ has no such
center. The answer is the fixed point of a coordinator-free relaxation, and we prove the conditions
under which it equals the centralized result.

\paragraph{Local-as-view query answering.}
The identification side of $\mathrm{ask}(\cdot)$ has a precise formal neighbor in database theory:
\emph{local-as-view} (LAV) data integration, in which each source is described as a view over a global
schema and a query is answered by composing the views that cover it~\cite{lav,dataint}. Our
\S\ref{sec:b} is exactly this picture under a Gaussian metric. Each agent holds a lossy view
$y_i=C_i s^\star$, collective identifiability is view coverage ($\bigcup_i\mathrm{row}(C_i)$ full rank),
and the carrier graph is view reachability. This is the LAV pattern in a linear instantiation,
view-based least-squares recovery, rather than classical relational LAV (conjunctive-query rewriting,
certain answers under an open world). The structural correspondence is exact; the query algebra is not.
Classical LAV rewrites and routes the query at a mediator that holds the global view catalog, which is a
center, precisely what \S\ref{sec:intro} removes. We replace that catalog with content-addressed lineage
(Lemma~\ref{lem:rootid}), so a node can tell where a direction originates with no central registry. Its
inference-side cousin, Gaussian belief propagation, is exact in $O(\mathrm{diam})$ on tree-like carriers
and converges under walk-summability~\cite{walksum}, a condition implied by the diagonal dominance, and
indeed the M-matrix structure, that our convergence already establishes. On loopy carriers it converges
iteratively rather than in a single pass, at a rate set by the walk-summability margin, faster than
Jacobi diffusion but not $O(\mathrm{diam})$ in general. A lineage-routed message-passing realization of
$\mathrm{ask}$ thus inherits our admission/emission conditions unchanged, and we state it as the latency
program in \S\ref{sec:scope}.

\paragraph{Distributed associative memory.}
Closest in mechanism is distributed associative memory. \cite{distam} performs energy-based
associative retrieval through fully decentralized message passing, and \cite{fedhop} reconstructs
latent archetypes from partial, noisy, heterogeneous observations across clients, both our exact
retrieval form. The latter, however, ships each client's Hebbian operator to a central server for
aggregation. We remove the server, and beyond the retrieval rule we characterize the admission
policy that determines whether retrieval succeeds at all.

\paragraph{Agent interoperability protocols.}
A parallel effort standardizes how autonomous agents discover and message one another (MCP, A2A,
ACP, and the decentralized ANP~\cite{agentproto}), with recent work taking a semantic rather than
transport view of these exchanges~\cite{semproto}. These define the envelope, not the cognition: none
specifies an inference operator or a notion of collective identifiability. Our protocol~\cite{mmp} is the
semantic memory layer such transports would carry; the two are complementary, not competing.

\paragraph{A disambiguation.}
We distinguish our setting from ``decentralized AI inference'' in its now-dominant sense: markets
that distribute the \emph{compute} for an otherwise centralized model across rented or verifiable
hardware (Render, Akash, Gensyn, and related DePIN networks). That decentralizes execution; we
decentralize cognition, leaving each node's model, state, and data in place.

\paragraph{Foundations and contribution.}
The construction rests on classical results we use directly: distributed estimation and
consensus~\cite{consinnov,olfatisaber}, potential games~\cite{potential}, synchronization~\cite{kuramoto,acebron},
graph signal processing~\cite{gsp}, and predictive coding~\cite{raoballard}. Against all of the
above, our contribution is neither the energy-based inference step nor the collective-inference principle,
both of which predate us, but their synthesis under sovereignty together with its formal
characterization. To our knowledge, this is the first statement of the admission/emission policy
conditions under which center-free, observation-only mesh inference is convergent,
identification-complete, and equal to the centralized optimum.

\section{A formal model of mesh inference}
\label{sec:model}

We build the model from first principles, in four moves: the \emph{substrate}, meaning what a mesh is
and what sovereignty forbids; the \emph{policy}, the single object every guarantee will turn on; the
\emph{inference dynamics}, meaning how the mesh reasons, namely energy relaxation, of which
$\mathrm{ask}$ is one step; and the \emph{soundness program}, the three properties mesh inference must
satisfy to be a foundation one can build on: that it converges, that it derives what the population
jointly determines, and that it exposes no member.

\begin{definition}[Mesh]
A mesh, the substrate of the Mesh Memory Protocol~\cite{mmp}, is a graph $G=(V,E)$ with agents $V=\{1,\dots,N\}$ and neighborhoods $N_i$. Agent $i$ holds
private state $h_i\in\mathbb{R}^d$ and sovereign parameters $x_i$, and emits a public, lossy, typed
observation $o_i=g_{x_i}(h_i)\in\mathbb{R}^p$, $p<d$. Only $o_i$ is transmitted; $h_i,x_i$ never
leave $i$. Observations are typed over fields $F$ with orthogonal projectors $\{P_f\}_{f\in F}$.
Each agent carries a private target $\tau_i$ (its own data), which anchors it and prevents trivial
collapse.
\end{definition}

\begin{definition}[Admission/emission policy]
Agent $i$ admits a neighbor's observation through a receiver-autonomous, per-field gate
\[
A_i(o_j)=\sum_{f\in F}\alpha_{i,f}\,\beta\!\big(s_{i,f}(o_j)\big)\,m_j^f\,\big(P_f o_j-P_f\hat o_i\big),
\qquad \hat o_i=g_{x_i}(h_i),
\]
with sovereign weights $\alpha_{i,f}\ge0$, band-pass gate $\beta\in[0,1]$ (admit the mid-band:
neither redundant nor noise; this per-field band-pass admission is Symbolic-Vector Attention
Fusion~\cite{svaf}, whose mid-band preference follows the Wundt curve~\cite{berlyne}), relevance
score $s_{i,f}$, and \emph{emission carrier}
$m_j^f\in\{0,1\}$ indicating whether $j$ emits field $f$. The policy is the pair
$\mathcal{A}=(\alpha,\beta)$ together with the emission map $m$. Everything in the sequel is a
property of $\mathcal{A}$ and $m$.
\end{definition}

The carrier $m_j^f$ is the faithful expression of loss and sovereignty: $i$ can be influenced by
$j$ on field $f$ only if $j$ chose to emit $f$. This single term, an emission decision sitting
inside the receiver's update, drives both the convergence and the capability analysis.

\begin{definition}[Mesh inference]
Write the local free energy
\[
L_i=\tfrac12\sum_{j\in N_i}\sum_{f\in F} w_{ij}^f\,m_j^f\,\big\|P_f(o_i-o_j)\big\|^2
+\tfrac12\,\mu_i\|o_i-\tau_i\|^2
+\tfrac12\,\nu_i\|o_i-\mathsf{mem}_i(q)\|^2,
\]
where $w_{ij}^f=K_{ij}\,\alpha_{i,f}\,\beta(s_{i,f}(o_j))\ge0$ is the per-field admitted weight and $m_j^f$
the emission carrier, so the coupling term is the gradient of the admission gate $A_i$. Here $\tau_i$ is $i$'s private anchor, and $\mathsf{mem}_i(q)$ is a \emph{question-relevant}
read-out of $i$'s admitted-observation store (the memory term; kept separate from $\tau_i$ so the
private anchor is never conflated with admitted evidence). Each agent descends $L_i$ through its own
emission, treating admitted neighbor observations as fixed data. \emph{Mesh inference} is the
collective relaxation of this coupled free energy: to infer the answer to a question $q$, the mesh
clamps the question-field coordinates to $q$, lets every agent descend its own $L_i$ in concert, and
reads the freed coordinates at the joint equilibrium. We write one such step $\mathrm{ask}(q)$, the
inference primitive, and the answer is the fixed point the views jointly determine, not a value any
member stored.
\end{definition}

Because every result below holds for any nonnegative weights, the policy enters the guarantees
through its \emph{support} (which fields each agent admits, $w_{ij}^f>0$, and emits, $m_j^f=1$), not
through the band-pass magnitudes.

In the linear regime $\mathsf{mem}_i(q)$ is a similarity-weighted read-out of the store; replacing this single
quadratic term by a log-sum-exp associative energy over the full store is the entire step to the
attention-class regime of \S\ref{sec:scope}.

\section{Property (a): convergence}
\label{sec:a}

\subsection{Well-posedness of mesh inference}
Stack the configuration $o=(o_1,\dots,o_N)$. The global Hessian is
$H=L_w+\mathrm{diag}(\mu_i+\nu_i)$, where $L_w$ is the (per-field) admitted-weight graph Laplacian. On a
connected graph with some $\mu_i>0$, $H\succ0$.

\begin{proposition}[Every inference has a unique answer]
Partition coordinates into the clamped question set $Q$ and the free set $\bar Q$. The restricted
problem has Hessian $H_{\bar Q\bar Q}$, a principal submatrix of $H$, hence positive definite. So
$\mathrm{ask}(q)$ has a unique minimizer and terminates, for any $\nu_i\ge0$. Under forwarding the
effective operator is the asymmetric M-matrix of Lemma~\ref{lem:mmatrix} rather than $H$; uniqueness
persists because that operator is a nonsingular M-matrix on the anchored-reachable component, so the
clamped system retains a unique equilibrium.
\end{proposition}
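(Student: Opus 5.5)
The proof splits into two regimes: the symmetric (potential) case, governed by the global free energy $\sum_i L_i$ with Hessian $H$, and the forwarding case, governed by the asymmetric operator of Lemma~\ref{lem:mmatrix}.

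\textbf{Step 1: positive definiteness of $H$.} With neighbor observations held as data and the question set $Q$ fixed, the global objective is quadratic in $o$. Its Hessian is $H=L_w+\mathrm{diag}(\mu_i+\nu_i)$, with both summands positive semidefinite. For $v\neq 0$ we have
\[
v^\top H v=\sum_{(i,j),f} w_{ij}^f m_j^f\|P_f(v_i-v_j)\|^2+\sum_i(\mu_i+\nu_i)\|v_i\|^2 .
\]
Suppose this vanishes. Then $v$ is constant on every admitted-and-emitted field along each carrier edge, and $v_i=0$ at every anchored node ($\mu_i>0$). Connectivity of the per-field carrier graph to an anchor then propagates $v=0$ field by field. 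I would state this per field $f$, requiring each field's carrier component to contain an anchor. This is the precise reading of ``connected with some $\mu_i>0$''; otherwise there is a null direction.

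\textbf{Step 2: the restricted problem.} Clamping $o_Q=q$ turns the objective into a quadratic in $o_{\bar Q}$, with linear term $H_{\bar Q Q}q$ plus the anchor and memory terms. Its Hessian is the principal submatrix $H_{\bar Q\bar Q}$. By Cauchy interlacing, or simply by restricting $v^\top Hv$ to vectors supported on $\bar Q$, this submatrix is positive definite. The minimizer is therefore $o_{\bar Q}^\star=H_{\bar Q\bar Q}^{-1}(\text{rhs})$, which is unique. Since $\nu_i\ge 0$ only adds to the diagonal, this holds for every $\nu_i\ge0$. Termination follows because gradient or Jacobi descent with step below $2/\lambda_{\max}$ contracts at a rate set by $\lambda_{\min}(H_{\bar Q\bar Q})>0$.

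\textbf{Step 3: the forwarding case.} Here each agent descends its own $L_i$ with asymmetric weights, so the fixed-point equations read $Mo=b$. The matrix $M$ has nonpositive off-diagonal entries, and its diagonal equals $\sum_j w_{ij}^f m_j^f+\mu_i+\nu_i$. The matrix $M$ is therefore weakly row diagonally dominant, and strictly so at anchored rows. I take Lemma~\ref{lem:mmatrix} as given: $M$ is a nonsingular M-matrix on the anchored-reachable component. The remaining work is to show that clamping preserves this. The principal submatrix $M_{\bar Q\bar Q}$ keeps the Z-sign pattern. Deleting clamped columns only removes nonpositive off-diagonal entries, so rows that touched $Q$ become \emph{strictly} dominant. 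Weak dominance together with every row reaching a strictly dominant row through the directed graph (weak chain diagonal dominance) gives nonsingularity. Equivalently, a principal submatrix of a nonsingular M-matrix is again a nonsingular M-matrix. Either argument yields a unique equilibrium $o_{\bar Q}^\star=M_{\bar Q\bar Q}^{-1}(b_{\bar Q}-M_{\bar QQ}q)$, and Jacobi iteration converges because the spectral radius of $D^{-1}N$ is less than $1$ for M-matrices.

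\textbf{Main obstacle.} I expect the difficulty to lie in nodes outside the anchored-reachable component in Step 3, and in fields with no anchored carrier path in Step 1. There the operator is singular, so uniqueness fails unless such coordinates are either clamped or excluded. I would handle this by restricting the claim to the anchored-reachable component, as the statement does, and by noting that clamping $Q$ can itself act as an anchor. I would then verify that the block-triangular structure, with unreachable nodes not influencing reachable ones, leaves the reachable block's solution well-defined.
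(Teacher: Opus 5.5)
Steps 1--3 follow the paper's justification, which is contained in the statement itself. That justification rests on two facts: a principal submatrix of the positive definite $H$ is positive definite, and the forwarding operator of Lemma~\ref{lem:mmatrix} is a nonsingular M-matrix on the anchored-reachable component, which clamping preserves. Your per-field anchoring hypothesis in Step~1 is a correct sharpening of the paper's ``connected graph with some $\mu_i>0$''. If some field's carrier component contains no node with $\mu_i+\nu_i>0$, then $H$ has a null direction. Your observation that deleting clamped columns makes adjacent rows strictly dominant is the right way to pass from $M$ to $M_{\bar Q\bar Q}$.

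The step that would fail is the resolution in your obstacle paragraph. You propose to verify that unreachable nodes do not influence reachable ones, but the triangular structure runs the other way. Let $R$ be the nodes with a dependency path ($M_{ij}\neq0$) to an anchored or clamped row, and let $U$ be the rest. What is automatic is $M_{UR}=0$, since a node of $U$ that depended on a node of $R$ would itself be in $R$. So $U$ is closed, and $M_{UU}$ has zero row sums and is singular. Nothing in the Z-matrix structure forces $M_{RU}=0$. If $M_{RU}\neq0$, the free values $o_U$ enter the equations of $R$, and nonsingularity of $M_{RR}$ no longer gives uniqueness there.

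Concretely, let $a$ be anchored, and let $c$ have $\mu_c=\nu_c=0$, admit nothing, and yet emit $f$. Let $b$ admit $f$ from both. Then $b\in R$, but $o_b=(w_{ba}o_a+w_{bc}o_c)/(w_{ba}+w_{bc})$ inherits the arbitrariness of $o_c$.

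The missing idea is where carriers come from in the forwarding model. A node emits $f$ only in two cases:
\begin{enumerate}
\item it natively observes $f$, and is then anchored on $f$;
\item it has admitted a source-novel copy from a neighbour $j$ that emits $f$, which is a dependency edge $M^f_{ij}\neq0$.
\end{enumerate}
By induction on switch-on time, every emitter lies in $R$. Hence $m_u^f=0$ for $u\in U$, the entries of $M_{RU}$ vanish, and the system splits into a nonsingular $M_{RR}$ and an isolated, singular $M_{UU}$. Under an arbitrary emission map, such as the over-emitting $c$ above, uniqueness on $R$ genuinely fails. The paper's one-line justification relies on this carrier-origin fact implicitly, so your write-up should state it explicitly.
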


\subsection{The reciprocity spectrum}
Whether the relaxation descends a common potential depends on the symmetry of the effective
coupling. Three regimes:
\begin{itemize}
\item \emph{Symmetric admission} ($w_{ij}=w_{ji}$): the dynamics is exactly the gradient flow of
the mesh free energy $\Phi(o)=\tfrac12\sum_{(i,j)\in E}w_{ij}\|o_i-o_j\|^2+\tfrac12\sum_i\mu_i\|o_i-\tau_i\|^2$
(a Tikhonov graph-signal denoising energy~\cite{gsp}),
which is then a Lyapunov function; on a connected graph $\Phi$ is strongly convex and the unique
equilibrium is reached exponentially at rate $\eta\,\lambda_{\min}(H)$.
\item \emph{Reversible admission} (a positive $\pi$ exists with $\pi_i w_{ij}=\pi_j w_{ji}$): a
$\pi$-weighted potential survives, extending the guarantee to a strictly larger class.
\item \emph{Asymmetric admission}: no scalar potential exists, and the quadratic certificate
$\|o-o^\star\|^2$ stops being monotone once the symmetric part of the coupling loses positive
definiteness (the anchor $\mathrm{diag}(\mu)$ sets that margin). Convergence itself is not lost. The
effective coupling is always a weakly diagonally dominant M-matrix (Lemma~\ref{lem:mmatrix}), so its
eigenvalues keep nonnegative real part for any nonnegative weights. Asymmetry costs the common potential
and the convergence rate, never stability.
\end{itemize}

\subsection{Convergence under forwarding: a structural argument}
Identification (\S\ref{sec:b}) will force agents to re-emit, or \emph{forward}, admitted directions
they did not natively observe, so that information crosses lossy intermediates. Forwarding is an
emission-side decision, and in the faithful receiver-autonomous model it is not symmetric: the
off-diagonals of the per-field operator are $-w_{ij}^f m_j^f$ versus $-w_{ji}^f m_i^f$, asymmetric
whenever $m_i^f\ne m_j^f$. The symmetric-potential argument therefore does not apply during propagation;
indeed $\lambda_{\min}$ of the operator's symmetric part goes negative at the propagation frontier.
Convergence nonetheless holds, by structure rather than by a common potential.

\begin{lemma}[Convergence under non-selective forwarding]
\label{lem:mmatrix}
Let agents forward \emph{non-selectively}: once $i$ admits a source-novel direction $f$ it emits $f$
thereafter, so emission sets grow monotonically $E_0\subseteq E_1\subseteq\cdots\subseteq E_T$.
Then $\mathrm{ask}(q)$ converges to the unique anchored equilibrium $o^\star$, exactly, with finite
transient. Anchoring sets the terminal rate through the effective diameter, never the stability.
\end{lemma}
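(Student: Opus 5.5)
The idea is to split the run at the last time the emission set changes. Because the $E_t$ are nested subsets of the finite set $V\times F$, they can grow strictly only finitely many times. So there is a finite $T^\ast\le N|F|$ after which $E_t=E_T$ for good. Before $T^\ast$ the dynamics is a switched linear system with finitely many switches. Since each mode's flow is affine with bounded coefficients, the state stays bounded on $[0,T^\ast]$; this is the finite transient. After $T^\ast$ the system is a single time-invariant affine flow $\dot o=-\eta(M o-b)$. Here $M=L^{\mathrm{out}}_{w,m}+\mathrm{diag}(\mu_i+\nu_i)$ restricted to the free coordinates $\bar Q$. Its off-diagonal entries are $-w_{ij}^f m_j^f\le0$. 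The clamped coordinates and the terms $\mu_i\tau_i,\ \nu_i\mathsf{mem}_i(q)$ are absorbed into $b$. Convergence then reduces to showing that $M$ is a nonsingular M-matrix. Once that holds, every eigenvalue has positive real part, the flow (or the Jacobi/gradient iteration at small enough step $\eta$) converges exponentially to the unique point $o^\star=M^{-1}b$, and the limit does not depend on the transient.

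I will establish the M-matrix property per field, since the projectors $P_f$ are orthogonal and $M$ is block diagonal over $f$. Each row of the field-$f$ block has diagonal entry $\sum_j w_{ij}^f m_j^f+\mu_i+\nu_i$. Its off-diagonal entries sum in absolute value to $\sum_j w_{ij}^f m_j^f$, so the block is a Z-matrix that is weakly row diagonally dominant. Rows with $\mu_i+\nu_i>0$, together with rows coupled to a clamped coordinate, are strictly dominant. I then invoke the standard criterion (Taussky / weakly chained diagonal dominance): a weakly diagonally dominant Z-matrix is a nonsingular M-matrix iff every row reaches a strictly dominant row along the directed graph of nonzero off-diagonals. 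This is exactly the "anchored-reachable component" already used in the Proposition. On that component $M$ is nonsingular, hence $\mathrm{Re}\,\lambda>0$. Rows outside it carry no anchor and no admitted signal, so I will either exclude them from $\bar Q$ or note that they are neutral modes with no effect on the anchored answer. Gershgorin discs centred at the diagonal entries, with radius no larger than those entries, give the weaker fact used in the reciprocity discussion, namely $\mathrm{Re}\,\lambda\ge0$ for any nonnegative weights.

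The main obstacle is reachability. I have to argue that after $T^\ast$ the anchored-reachable component of the frozen carrier graph is the one the statement means, and that no agent sits in a cycle with no path to an anchor. Non-selective forwarding is what prevents this. An agent that admits a source-novel direction $f$ has a neighbour path, through emitters of $f$, back to some native observer of $f$. That observer's row is anchored ($\mu>0$ on its private target). Because forwarding is never withdrawn, the path persists in $E_T$, so the chain condition holds. I would check that monotonicity is enough here and that no forwarded edge can create an unanchored strongly connected class.

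For the rate claim, I would bound the spectral abscissa of $-M$ via the chain length. The contraction factor of the associated iteration degrades geometrically in the length of the longest chain to a strictly dominant row, which is the effective diameter. So anchoring strength $\mu$ sets the rate only through this diameter and never through the sign of the eigenvalues, as the lemma states.
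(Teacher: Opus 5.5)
Your proposal is correct and shares the paper's skeleton: finitely many monotone carrier flips, a finite transient as a finite concatenation of linear-ODE segments, and exponential convergence once the configuration freezes. Your count of at most $N|F|$ switches is also the safe one, since the paper's $\mathrm{diam}+1$ presumes hop-synchronous flips.

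Where you differ is the certificate for the terminal operator. The paper asserts that in $E_T$ every carrier-connected pair mutually emits, so the terminal operator is symmetric positive definite. You instead prove it is a nonsingular M-matrix by weakly chained diagonal dominance, with an explicit time-ordered argument that every admitting row keeps a path through emitters back to an anchored native observer.

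Your route is the more general one. Mutual emission symmetrizes the carrier pattern but not the weights, because $w_{ij}^f=K_{ij}\alpha_{i,f}\beta(s_{i,f}(o_j))$ is receiver-specific. The paper's symmetric-positive-definite claim therefore tacitly needs $w_{ij}^f=w_{ji}^f$ and symmetric admission support. Chained dominance handles asymmetric and even reducible terminal operators, and it does not care which monotone terminal configuration the schedule produces.

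What the paper's symmetric terminus buys in exchange is a common potential and a clean spectral rate, $\lambda_{\min}$ of a symmetric anchored Laplacian. Corollary~\ref{cor:rate} reads this as the $O(\mathrm{diam}^2)$ Dirichlet mode, independent of $\mu$ above threshold.

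Your final paragraph does not reproduce that rate. A chain-length bound on the Jacobi contraction degrades geometrically in the chain length, far worse than $\mathrm{diam}^{-2}$. It also depends on $\mu$ through the dominance margin $\mu_i/(\mu_i+\sum_j w_{ij}^f m_j^f)$ at anchored rows, so it does not support your claim that $\mu$ enters only through the diameter. What your argument does establish rigorously is the qualitative half: stability never depends on $\mu$.
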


\begin{proof}
The per-field operator $M^f$ is a Z-matrix: off-diagonals $M^f_{ij}=-w_{ij}^f m_j^f\le0$, and row
diagonal $\mu_i\mathbf{1}[\mathrm{obs}_i{=}f]+\sum_j w_{ij}^f m_j^f$ equal to the anchor term plus the
row's off-diagonal magnitude sum. Hence $M^f$ is weakly diagonally dominant, and on the
anchored-reachable component---where an anchored row makes it irreducibly diagonally dominant---a
nonsingular M-matrix, so every configuration is itself Hurwitz ($\mathrm{Re}(\lambda)>0$). Two
structural facts give convergence with no numerical bound. \emph{(i) Finite monotone switching:} each
carrier $m_i^f$ flips $0\to1$ exactly once, at the instant $i$ first admits a source-novel copy of $f$,
so the configuration is constant after a finite time $T$ ($\le\mathrm{diam}+1$ distinct configurations
along the propagation path), terminating in $E_T$, where every carrier-connected pair mutually emits
$f$ and the terminal operator is symmetric positive definite, hence Hurwitz. \emph{(ii) Finite-time segments:} each of the $\le\mathrm{diam}+1$ configurations is active for a
finite time, and a linear ODE with bounded operator carries a finite state to a finite state over a
finite interval. The trajectory on $[0,T]$ is therefore a finite concatenation of finite-time
linear-ODE segments, so the state at $T$ is finite---and this finiteness holds for \emph{any}
switching schedule, since the number of segments is bounded by $\mathrm{diam}+1$ regardless of when
the switches occur. Within a segment a non-normal $M^f$ can amplify transiently
($\sup_t\|e^{-M^f t}\|>1$, e.g.\ $\approx4.2$ at gain $4$), so the transient is \emph{finite per
instance}, not uniformly bounded; its magnitude may grow with diameter, its finiteness does not. From
$T$ onward the terminal symmetric positive-definite operator is Hurwitz and drives exponential
convergence to $o^\star$.
\end{proof}

\begin{remark}
The M-matrix form is cleaner than a dwell-time argument: with finitely many finite-time segments
there is no escape to rule out, so the \emph{finiteness} of the transient is schedule-independent
(its \emph{magnitude}, set by operator non-normality, may grow with diameter). Selective forwarding
($m_i^f\ne m_j^f$, including adversarial over-emission) drives the coupling's symmetric part
indefinite past the anchor margin, so the quadratic Lyapunov certificate and the common potential are
lost. But the operator remains an M-matrix with eigenvalues in the right half-plane, so it still
converges. Selective over-emission slows convergence, since it shrinks the spectral gap, and is
lineage-detectable (\S\ref{sec:unified}), but it does not destabilize. The one failure that is sharp
rather than gradual is loss of carrier connectivity (\S\ref{sec:b}).
\end{remark}

\subsection{Retrieval rate}
\begin{corollary}[Latency]
\label{cor:rate}
The slowest mode of $\mathrm{ask}$ is $\lambda_{\min}$ of the anchored terminal operator. For an
anchored path it is the Dirichlet--Laplacian ground mode, $\lambda_{\min}\approx(\pi^2/4)/N^2$
(empirically $N^2\lambda_{\min}\to2.4$, matching the path constant $\pi^2/4\approx2.47$),
\emph{independent of the anchor strength} $\mu$ once $\mu$ clears the vanishing threshold
$\sim\pi^2/4N^2$ that separates the weak- from the strong-anchor regime. Hence retrieval latency is
$O(\mathrm{diam}^2)$, the rate of graph heat-diffusion, set by geometry rather than anchor magnitude.
The effective diameter is the distance to the nearest anchor, so denser anchoring shortens it to
$O(r^2)$ in inter-anchor distance $r$. Validation therefore buys latency the same way it buys
capability: by adding validated anchors that shrink $r$, not by raising $\mu$, which above threshold
does nothing.
\end{corollary}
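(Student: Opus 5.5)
The plan is to reduce the latency claim to the smallest eigenvalue of a discrete Dirichlet problem. By Lemma~\ref{lem:mmatrix}, after the finite switching time $T$ the dynamics is governed by the terminal operator $M^f_T$, which is symmetric positive definite. Expanding $o(t)-o^\star$ in the eigenbasis of $M^f_T$, each mode decays like $e^{-\eta\lambda t}$, so the asymptotic rate of $\mathrm{ask}$ is $\eta\,\lambda_{\min}(M^f_T)$. The pre-$T$ transient is finite, and $T\le\mathrm{diam}+1$ configurations, so it does not affect the scaling. This justifies the first sentence of the statement.

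\textbf{Path computation.} Take the anchored path of $N$ nodes with unit weights and a single anchor of strength $\mu$ at one end, so $M=L_{\mathrm{path}}+\mu e_1e_1^\top$. Use the ansatz $v_k=\cos(\theta(k-\tfrac12))$ or the equivalent $\sin$ form, and impose the free (Neumann) condition at the unanchored end and the Robin condition $\mu v_1=\ldots$ at the anchored end. This gives a transcendental equation for $\theta$ in which $\mu$ enters only through a boundary phase, with eigenvalue $\lambda=2-2\cos\theta\approx\theta^2$.
\begin{itemize}
\item In the limit $\mu\to\infty$ (Dirichlet at node $1$, Neumann at node $N$), the ground mode is a quarter wave: $\theta\approx\pi/(2N)$, so $\lambda_{\min}\approx\pi^2/(4N^2)$.
\item For finite $\mu$, the Robin phase shifts $\theta$ by $O(1/(\mu N))$ relative to $\theta$ itself. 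The relative correction is therefore negligible once $\mu\gg 1/N$.
\item Conversely, when $\mu\ll\pi^2/(4N^2)$, a Rayleigh quotient with the constant vector gives $\lambda_{\min}\le\mu/N$. The anchor then dominates and the regime is weak-anchor.
\end{itemize}
The crossover scale is where the two expressions $\mu/N$ and $\pi^2/(4N^2)$ compare. I would state the threshold as "$\mu$ clears a vanishing threshold," with the exact constant left as the paper's heuristic $\sim\pi^2/4N^2$. The cited numerical $N^2\lambda_{\min}\to2.4$ is only corroboration.

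\textbf{Generalization to a graph.} Bound $\lambda_{\min}$ from below by a Dirichlet--Poincaré inequality. For any $v$, pick a shortest path from each node to its nearest anchor, of length $\le r$. Apply Cauchy--Schwarz along that path, summing edge differences plus the anchored value. This gives $\|v\|^2\lesssim r^2\,(v^\top L v+\sum_a\mu_a v_a^2)$ for $\mu$ above threshold, hence $\lambda_{\min}\gtrsim 1/(r^2\cdot\text{congestion})$. For the matching upper bound, the path computation supplies a test vector, showing $O(r^2)$ is tight. Since $r\le\mathrm{diam}$, the latency is $O(\mathrm{diam}^2)$, and adding anchors shrinks it to $O(r^2)$. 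Raising $\mu$ above threshold only moves the Robin phase by a vanishing amount, which gives the final "does nothing" claim.

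\textbf{Main obstacle.} I expect the graph-level lower bound to be the hard step. Path congestion in the Poincaré argument can introduce degree or volume factors, so "$O(r^2)$" honestly holds only up to such factors. I would state it for bounded-degree or tree-like neighborhoods around anchors and present the path case as the sharp instance.
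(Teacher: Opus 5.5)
\textbf{Path analysis and the threshold.} Your path computation is correct, and it goes further than the paper. The paper supports this corollary only by identifying the quarter-wave ground mode of the end-anchored path (Dirichlet at the anchor, Neumann at the free end) and by the measured $N^2\lambda_{\min}\to2.4$; it gives no general-graph argument. But your own bullets contradict the threshold you then adopt, and the discrepancy is in the exponent, not the constant. The Robin shift of relative size $O(1/(\mu N))$ and the constant-vector bound $\lambda_{\min}\le\mu/N$ both put the crossover at $\mu\asymp\pi^2/(4N)$. Take $\mu=N^{-3/2}$, which clears $\pi^2/(4N^2)$ for large $N$: then $\lambda_{\min}\le N^{-5/2}\ll\pi^2/(4N^2)$. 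So $\mu$-independence fails on the whole window $\pi^2/(4N^2)\ll\mu\ll 1/N$. You should correct the threshold to $\mu N\gg 1$ rather than defer to the stated one; it is still vanishing, so the qualitative claim survives. In general the constant vector gives $\lambda_{\min}\le\sum_a\mu_a/|V|$. Independence therefore needs at least $\sum_a\mu_a\gtrsim|V|\,\lambda_{\mathrm{Dir}}$, where $\lambda_{\mathrm{Dir}}$ is the $\mu\to\infty$ value, not $\mu\gtrsim\lambda_{\mathrm{Dir}}$. Separately, in your first step $\mathrm{diam}+1$ counts switching segments. It bounds neither their total duration $T$ nor $\|o(T)-o^\star\|$, which Lemma~\ref{lem:mmatrix} explicitly allows to grow with diameter. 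That step proves the asymptotic rate $\eta\lambda_{\min}$. A latency bound additionally needs $T=O(\mathrm{diam}^2)$ and control of the transient, which costs $(\eta\lambda_{\min})^{-1}\log(\|o(T)-o^\star\|/\varepsilon)$.

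\textbf{The graph step.} This step fails under the hypothesis you propose, because tree-like neighbourhoods are the worst case, not the benign one. Done carefully with unit weights, your Cauchy--Schwarz argument gives $\|v\|^2\le(r+1)\max\{c_{\max},\,n_{\max}/\mu_{\min}\}\,\big(v^\top L_w v+\sum_a\mu_a v_a^2\big)$. Here the factor $r+1$ is the path length, $c_{\max}$ is the largest number of routed paths through one edge, and $n_{\max}$ is the largest anchor basin. A second factor of $r$ therefore appears only when $c_{\max}=O(r)$, i.e.\ when basins are path-like. This bound is essentially sharp, as two examples show.
\begin{enumerate}
\item Take a binary tree of depth $r$ anchored at its root (maximum degree $3$). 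The vector equal to $0$ at the root and $1$ elsewhere has Rayleigh quotient $2/(2^{r+1}-2)$, so $\lambda_{\min}\lesssim 2^{-r}$ for every $\mu$. Latency is then exponential in $r$ and in $\mathrm{diam}=2r$; your bound gives the matching $\lambda_{\min}\gtrsim 1/(r2^r)$.
\item Take a three-dimensional grid with one anchor per box of side $r$. The constant vector gives $\lambda_{\min}\le\mu/r^3$, and the $O(1)$ capacity of a lattice point keeps $\lambda_{\min}=O(r^{-3})$ even as $\mu\to\infty$.
\end{enumerate}
So neither bounded degree nor tree-likeness yields $O(r^2)$. The hypothesis that does is $c_{\max}=O(r)$, for example anchor basins of size $O(r)$ as on the anchored path. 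Outside that class, the honest statement is relaxation time $O\big(r\max\{c_{\max},n_{\max}/\mu_{\min}\}\big)$. The corollary's general $O(\mathrm{diam}^2)$ and $O(r^2)$ claims are provable only on that class, with the path as the sharp instance.
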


\subsection{Asynchronous realization}
The relaxation analyzed above is continuous-time. A deployed engine runs asynchronous, discrete
message passing, with no global clock and with delayed or dropped messages and node churn. Convergence
carries over, by the same M-matrix structure.

\begin{lemma}[Asynchronous convergence]
\label{lem:async}
Run the discrete iteration that updates each coordinate by the Jacobi/Gauss--Seidel map of $M$, with
each node activating at arbitrary times and reading possibly-stale neighbour values. Because
$M=L_w+\operatorname{diag}(\mu)$ is an irreducibly diagonally-dominant M-matrix, the Jacobi matrix
$B=I-D^{-1}M$ ($D=\operatorname{diag}M$) has spectral radius $\rho(B)<1$, so the iteration converges to
the continuous fixed point $o^\star=M^{-1}g$ for any bounded-delay schedule. If every node is anchored
($\mu_i>0$), $B$ is a strict max-norm contraction ($\|B\|_\infty<1$) and convergence is robust to
\emph{total} asynchronism (arbitrary delays~\cite{bertsekas}): no global clock or reliable delivery is required.
\end{lemma}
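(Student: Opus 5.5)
The plan has three parts: show $\rho(B)<1$ from the M-matrix structure, turn that into convergence under bounded staleness, and then give the total-asynchronism case through a max-norm contraction.

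\emph{Step 1: $\rho(B)<1$.} Write $M=D-N$ with $D=\operatorname{diag}M$ and $N\ge 0$ entrywise, since $M$ is a Z-matrix. Then $B=D^{-1}N\ge0$. Each row of $B$ sums to $\sum_j w_{ij}/(\sum_j w_{ij}+\mu_i)\le 1$, with strict inequality on every anchored row. This is weak diagonal dominance rewritten as $B\mathbf{1}\le\mathbf{1}$.

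On the anchored-reachable component, $M$ is irreducible. So $B$ is a nonnegative irreducible matrix whose row sums are all at most $1$ and at least one of which is strictly below $1$. The Perron--Frobenius comparison (Taussky's theorem for irreducibly diagonally dominant matrices) then gives $\rho(B)<1$. Equivalently, $M$ is a nonsingular M-matrix, so $M^{-1}\ge0$, and the regular splitting $M=D-N$ has $\rho(D^{-1}N)<1$ by Varga's regular-splitting theorem.

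The fixed point of $o\mapsto Bo+D^{-1}g$ is exactly $M^{-1}g=o^\star$. This is the same equilibrium as the continuous flow, whose uniqueness is the Proposition on unique answers.

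\emph{Step 2: bounded delays.} Since $B\ge0$ and $\rho(B)<1$, Perron--Frobenius supplies a positive vector $v$ with $Bv\le\rho' v$ for some $\rho'<1$. Irreducibility gives the eigenvector directly; otherwise perturb $B$ slightly. This makes $B$ a strict contraction in the weighted max-norm $\|x\|_v=\max_i|x_i|/v_i$.

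The asynchronous convergence theorem of Bertsekas--Tsitsiklis~\cite{bertsekas} says a weighted max-norm contraction converges under any partially asynchronous schedule in which every node updates infinitely often. Our schedule qualifies, since delays are bounded. The error $e=o-o^\star$ then satisfies $\|e(t)\|_v\le\rho'^{\,k(t)}\|e(0)\|_v$, where $k(t)$ counts the completed update ``rounds.''

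Gauss--Seidel ordering is a special case of this schedule. Alternatively, the Stein--Rosenberg comparison for nonnegative $B$ gives it directly.

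\emph{Step 3: all nodes anchored.} If $\mu_i>0$ for every $i$, each row sum of $B$ is strictly below $1$, so $\|B\|_\infty=\max_i\sum_j w_{ij}/(\sum_j w_{ij}+\mu_i)<1$ and we can take $v=\mathbf{1}$. The totally asynchronous theorem of \cite{bertsekas} then applies with arbitrary unbounded delays. The only requirements are that each node updates infinitely often and that stale values are eventually discarded. Because the contraction is coordinatewise in the max-norm, no global clock or reliable delivery is needed.

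\emph{Main obstacle.} The hardest step is reconciling the Lemma with the forwarding regime. During propagation the operator is the asymmetric $M^f$ of Lemma~\ref{lem:mmatrix}, and it changes over time. I would handle this in two ways:

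\begin{itemize}
\item Each configuration is still a Z-matrix with the same row structure, so each $B^f$ still has $\|B^f\|_v<1$ on the anchored-reachable component.
\item By the monotone-switching argument of Lemma~\ref{lem:mmatrix}, only finitely many switches occur. After the last one, Steps 1--3 apply to the terminal operator from a finite state.
\end{itemize}

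A second subtlety is that unanchored nodes outside the anchored-reachable component have row sum exactly $1$. I would restrict the statement to that component, where irreducibility holds, and note that the remaining nodes are governed by the sharp carrier-connectivity failure deferred to \S\ref{sec:b}.
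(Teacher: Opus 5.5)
Your proposal is correct and follows the same route as the paper's proof. Irreducible diagonal dominance of the Z-matrix $M$ gives $\rho(B)<1$, hence bounded-delay asynchronous convergence to $o^\star=M^{-1}g$; full anchoring makes every row strictly dominant, so $\|B\|_\infty<1$ and the Bertsekas--Tsitsiklis totally asynchronous theorem applies.

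Your Perron-vector weighted max-norm in Step 2 fills in the bounded-delay step that the paper only asserts. It actually already gives total asynchronism, since $B\ge0$ with $\rho(B)<1$, so full anchoring only upgrades this to the unweighted norm. The forwarding-regime reconciliation is outside the lemma as stated. It needs only your finite-switching bullet, not a common weight vector $v$ across configurations.
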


\begin{proof}
$M$ is a weakly diagonally-dominant Z-matrix (\S\ref{sec:a}), strictly dominant on anchored rows; on a
connected graph with an anchor it is irreducibly diagonally dominant, so $\rho(B)<1$ and both the
synchronous and bounded-delay asynchronous iterations converge to $o^\star$. Full anchoring makes every
row strictly dominant, so $\|B\|_\infty=\max_i\sum_{j\ne i}|M_{ij}|/M_{ii}<1$, a max-norm contraction,
to which the asynchronous-convergence theorem applies under arbitrary delays.
\end{proof}

\begin{remark}[Churn is transient disconnection]
A node leaving and rejoining is transient carrier disconnection, a live instance of property~(b). While
a node is absent its coordinate cannot track, leaving a residual that scales with churn rate and shrinks
with anchor density; forwarding re-establishes the carrier when it returns. Anchor density is therefore
the single dial for both retrieval latency (Corollary~\ref{cor:rate}) and asynchronous and churn
robustness, and anchors are validated observations (\S\ref{sec:b}), so validation buys both. We verify
this in simulation. The iteration reaches machine precision under $60\%$ message loss without churn
(loss is modeled as staleness until resend, hence bounded delay, which is benign), and the churn residual falls with anchor count, vanishing once the mesh is
strictly contractive.
\end{remark}

\section{Property (b): identification-completeness}
\label{sec:b}

We now ask when mesh inference realizes the answer a sovereign population determines collectively: one
that exists in the union of views but in no single agent, and which the centralized optimum names. Let
each agent hold a lossy view $y_i=C_i s^\star$ of a shared latent $s^\star$. The latent is
\emph{collectively identifiable} when $\bigcup_i\mathrm{row}(C_i)$ is full rank, though each $C_i$ alone
is rank-deficient. This is distributed estimation over a network~\cite{consinnov}. The contribution here is not the
estimator but the characterization of which admission/emission policies let purely local, sovereign
agents realize it---specifically the carrier-graph connectivity criterion (Thm~\ref{thm:ident}),
source-novel forwarding, and content-addressed lineage as the sole global side-channel, none of
which appear in classical distributed estimation, together with the observation-only dual
(\S\ref{sec:c}).

\begin{definition}[The mesh's collective inference]
\label{def:ci}
Let agent $i$'s \emph{view} be its constraint $V_i\!:\ y_i=C_i s^\star$ (knowledge of the projection
$C_i s^\star$), and let $\langle\cdot\rangle$ be the \emph{determination operator}: $\langle V\rangle$
is the set of queries whose answer is fixed by $V$, here the row span of its constraints. (The
inferential \emph{reasoning} reading, namely pattern completion and generalization, is the general abstraction
this realizes non-linearly in the associative regime, \S\ref{sec:scope}.) The collective determines
$\langle\bigcup_i V_i\rangle$, each member only $\langle V_i\rangle$. We formalize \emph{the mesh's}
collective inference as the closure gap
\[
  \mathcal N \;=\; \big\langle\textstyle\bigcup_i V_i\big\rangle\;\setminus\;\textstyle\bigcup_i\langle V_i\rangle,
\]
the answers fixed by the pooled views but by no member's own, even after it reasons to exhaustion. Here
$\mathrm{ask}(q)$ returns $q$'s answer in $\langle\bigcup_i V_i\rangle$, which is new knowledge precisely
when $q\in\mathcal N$. The gap is genuine. With $A$ knowing $s_1{+}s_2$ and $B$ knowing $s_1{-}s_2$,
neither determines $s_1$, yet the pair fixes $s_1=\tfrac12\big((s_1{+}s_2)+(s_1{-}s_2)\big)$, a fact no
member holds. The boundary is exact: $\mathcal N\subseteq\langle\bigcup_i V_i\rangle$, so this knowledge
is newly derivable, not new information. The collective realizes what its views jointly determine and
manufactures nothing beyond them. Collective identifiability ($\bigcup_i\mathrm{row}(C_i)$ full rank) is
exactly the case $\langle\bigcup_i V_i\rangle=$ the whole of $s^\star$-space.
\end{definition}

\subsection{Admission is local; identification is global}
\label{subsec:local}
A receiver scores incoming fields by novelty against its own anchors. But whether a locally redundant
view is globally rank-critical, the piece that completes the union, is invisible locally. Two
consequences follow, the second sharper than usually noticed.

\paragraph{Admission must preserve sources, not values.}
Redundancy in value space (the numbers look alike) is not redundancy in source space (same origin
versus independent observation). Independently sourced corroboration is what builds rank, so it must
not be gated. Content-addressed lineage distinguishes the two: with $\delta^{\mathrm{src}}_i(o_j)=
\mathbf{1}[\mathrm{roots}(\mathrm{lineage}(o_j))\not\subseteq R_i]$, the policy must admit if value-novel
or source-novel, and gate only if both are redundant.

\paragraph{Emission must carry, not merely admit.}
Hidden state never crosses the wire~\cite{mmp,meshcognition}, so an intermediate can recover a
rank-critical direction in its private state and never transmit it. The value then dies at an agent that
knows it: admission kept the edge open, but nothing carried the value onward. The fix is an
emission-side primitive, \emph{source-novel forwarding}: an agent re-emits an admitted source-novel
direction it did not natively observe.

\subsection{The carrier graph and the theorem}
\begin{definition}[Carrier graph]
For an identifying subspace $U$, the carrier graph $G_U$ has edge $\{i,j\}$ iff
$U\subseteq\mathrm{span}(\mathrm{emit}_i\cup\mathrm{emit}_j)$ (union masking: one emitting endpoint
suffices). When emissions are field-typed and $U$ is a single field $f$, $G_U=G_f$.
\end{definition}

\begin{theorem}[Identification-completeness]
\label{thm:ident}
Under provenance-aware admission and non-selective source-novel forwarding, $\mathrm{ask}(q)$
returns the centralized/Bayes-optimal answer---the conditional mean under the Gaussian model whose
prior the anchors $\tau_i,\mu$ encode and whose observation precisions the admitted weights
$w_{ij}^f$ carry---on the collectively-identifiable class \emph{iff}, for
every identifying subspace $U$, every observer of $U$ and every dependent receiver lie in a single
connected component of the carrier graph $G_U$. Forwarding is exactly the operation that extends
$G_U$ until the component connects; in the full-estimate case (every coordinate emitted) forwarding
is automatic and the condition holds trivially. Read through Def.~\ref{def:ci}, this is a
collective-intelligence statement: $\mathrm{ask}$ realizes the full collective closure
$\langle\bigcup_i V_i\rangle$, deriving every fact in $\mathcal N$, including those no member can, iff
the views each derivation composes are carrier-connected to the receiver that needs the answer.
Connectivity is what lets the reasoning route the fragments together; sever it and the fact falls back
to the prior.
\end{theorem}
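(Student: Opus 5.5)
The plan is to reduce the theorem to a statement about the terminal operator of Lemma~\ref{lem:mmatrix} and its relation to the centralized normal equations. First I would write the centralized/Bayes-optimal answer explicitly. Under the Gaussian model with prior precision $\mathrm{diag}(\mu)$ centred at the anchors $\tau$ and observation precisions $w_{ij}^f$, the posterior mean is the minimizer of a single quadratic. Equivalently, it is the solution of a linear system $H^{\mathrm{cen}}o=g^{\mathrm{cen}}$, in which every admitted direction $U$ couples every observer of $U$ to every receiver whose answer depends on $U$. On the other side, by Lemma~\ref{lem:mmatrix}, $\mathrm{ask}(q)$ converges under non-selective forwarding to the unique equilibrium of the terminal operator $M_T$ restricted to the free set $\bar Q$. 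By the well-posedness proposition this equilibrium is unique. Both answers are therefore fixed points of positive-definite (or nonsingular M-matrix) systems, and the theorem becomes a comparison of the two systems field by field. Since all weights enter only through their support, I would decompose per identifying subspace $U$, using the projectors $P_f$ in the field-typed case. This lets the argument run one $U$ at a time.

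For sufficiency, fix $U$ and assume every observer and every dependent receiver lie in one component $C$ of $G_U$. Provenance-aware admission guarantees that independently sourced copies of $U$ are never gated, because the source-novelty test via lineage fires. This means each observation of $U$ enters with its own precision and is never double-counted or dropped. Source-novel forwarding then makes the carriers on $C$ grow monotonically until, at $E_T$, every pair in $C$ mutually emits $U$. At that point the terminal operator on $U$ is the symmetric anchored Laplacian of $C$. Its equilibrium satisfies the same normal equations, projected onto $U$, as the centralized problem. The reason is that on a connected anchored component the Laplacian's only degrees of freedom are fixed by the anchored rows, so the consensus value on $U$ is the precision-weighted pooling of all observers' views. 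Summing over $U$ spanning $\bigcup_i\mathrm{row}(C_i)$ recovers the full centralized posterior mean on the identifiable class. In the full-estimate case every coordinate is emitted, so $G_U$ already equals the connected mesh graph. The $s_1\pm s_2$ example of Def.~\ref{def:ci} would serve as a sanity check that the composed direction lands in $\mathcal N$.

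For necessity, suppose for some $U$ a dependent receiver $r$ lies in a component of $G_U$ containing no observer of $U$. Because $M_T$ is block-triangular along carrier components for field $U$, with off-diagonals only where $m_j^f=1$, the equilibrium value of $r$ on $U$ is determined by the anchors and memory terms within $r$'s component alone. The observers' data $y_i$ on $U$ therefore do not enter $r$'s fixed point. The value at $r$ reduces to the prior, as the statement says. Meanwhile the centralized answer depends non-trivially on those $y_i$, since $U$ is identifying. Choosing $s^\star$ so that the two differ gives the counterexample.

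I expect the main obstacle to be the sufficiency identification step: showing that the \emph{asymmetric-then-symmetric} forwarded dynamics yields exactly the centralized precision weighting, rather than a lineage-distorted one in which forwarded copies act as extra observations. I would first try to prove that source-novel gating makes each source root contribute exactly once to the terminal operator. Concretely, this means that the forwarded carriers add Laplacian edges, which do not change the consensus value, but no new anchor terms. Lemma~\ref{lem:rootid} on content-addressed lineage would be invoked to certify this.
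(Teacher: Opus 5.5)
The skeleton matches the paper. You reduce $\mathrm{ask}$ to the terminal configuration of Lemma~\ref{lem:mmatrix}, where carrier-connected pairs mutually emit and the operator is the anchored Laplacian of $G_U$. You separate per field, with projectors for general $U$. You show that a receiver joined to observers gets the right value and one in an observer-free component gets the prior. You delegate single-crediting and termination to Lemma~\ref{lem:rootid}, and your necessity argument is the paper's ``and the prior otherwise.''

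\textbf{The gap is in the sufficiency bridge.} You build a separate $H^{\mathrm{cen}}$ that couples every observer of $U$ directly to every receiver. You equate it with the terminal system by asserting that the equilibrium on a connected anchored component is a consensus equal to the precision-weighted pooling of the observers' values. You also assert that the extra Laplacian edges created by forwarding leave it unchanged. Both claims fail when the anchors disagree. The minimizer of $\tfrac12\sum w_{ij}\|o_i-o_j\|^2+\tfrac12\sum_i\mu_i\|o_i-\tau_i\|^2$ is $(L_w+\mathrm{diag}(\mu))^{-1}\mathrm{diag}(\mu)\tau$, a node-dependent smoothing rather than a consensus. For instance, two observers with $\mu_1=\mu_2=w_{12}=1$ give $o_1=(2\tau_1+\tau_2)/3\neq(\tau_1+\tau_2)/2$, and adding edges changes this solution in general. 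So with noisy views your two systems have different couplings and different solutions. Likewise, your remark that weights enter only through their support holds for the guarantees, not for the answer values.

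\textbf{What carries the paper's proof is the exact view model $y_i=C_is^\star$.} Every observer of $f$ is anchored at the same true projection. The constant true value on the carrier component therefore has zero coupling energy and zero anchor energy, so it is the unique minimizer. This is why the paper can say the minimizer ``returns the true value at $i$ iff $i$ is connected in $G_f$ to an $f$-observer.'' It is also why forwarded edges are harmless and weight magnitudes drop out; no pooling identity is needed.

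If you want the noisy, conditional-mean reading, take the target exactly as the statement glosses it: the minimizer of the same anchored quadratic, with the admitted weights as precisions. Once every member of the component emits $U$, every admitted edge inside it is a carrier edge. The terminal operator and the target's precision matrix then coincide on that component, so the match is an identity of systems, not a consensus computation. Your observer-to-receiver $H^{\mathrm{cen}}$ is not the system the statement names.

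Your worry about forwarded copies acting as extra observations is settled at the operator level by the form of $M^f$ in Lemma~\ref{lem:mmatrix}. Forwarding flips carriers $m_j^f$, which are off-diagonal entries, while the anchor $\mu_i\mathbf{1}[\mathrm{obs}_i{=}f]$ stays at native observers. Lemma~\ref{lem:rootid} then supplies the lineage-level single-crediting, as you planned.
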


\begin{proof}[Proof (field-typed case)]
By Lemma~\ref{lem:mmatrix} the relaxation runs the directed per-receiver carrier dynamics, but it
converges to the equilibrium of its terminal configuration, where every carrier-connected pair
mutually emits, which is exactly the union-masked Laplacian of the carrier graph $G_U$. So
$\mathrm{ask}$'s limit is the minimizer analyzed below, and we may reason on $G_U$ directly.
The objective separates per field; each field $f$ is an anchored graph-Laplacian solve on $G_f$,
anchored at the observers of $f$. The minimizer returns the true value at node $i$ iff $i$ is
connected in $G_f$ to an $f$-observer, and the prior otherwise. The general subspace case follows by
replacing coordinates with the projectors defining $U$. Necessity of forwarding is the witness of
\S\ref{subsec:local}: an intermediate that admits but does not emit leaves $G_U$ disconnected at itself despite an
open admission edge. Exactness of the criterion over native observers, single-crediting of
each source, and termination of forwarding are supplied by Lemma~\ref{lem:rootid}.
\end{proof}

\begin{lemma}[Root-identity, termination, and exactness of source-novel forwarding]
\label{lem:rootid}
Assume content-addressed lineage: a CMB's identity, hence its root set $\mathrm{roots}(\cdot)$, is
determined by its content and lineage. A native observation by agent $k$ mints a fresh root and
emits $\{\rho_k\}$; forwarding re-emits admitted content without incorporating new native data; a
remix mints exactly one fresh root per new native datum. Let $R_i$ be the roots agent $i$ has
admitted, $\mathrm{row}(C_\rho)$ the subspace observed at native root $\rho$, and
$\mathrm{reach}(i)$ the native roots whose observer shares $i$'s carrier-graph component. Recall
source-novelty: $o$ is source-novel to $i$ iff $\mathrm{roots}(o)\not\subseteq R_i$. Under
non-selective source-novel forwarding:
\begin{enumerate}
\item[\textnormal{(P1)}] \emph{Root-invariance:} $\mathrm{roots}(\mathrm{forward}(o))=\mathrm{roots}(o)$.
\item[\textnormal{(P2)}] \emph{Termination / anti-echo:} each agent admits each root at most once, so
each admission adds at least one new root to a set bounded by $|\mathcal{R}|$ and triggers at most one
forwarding event; forwarding halts in $\le|V|\cdot|\mathcal{R}|$ events, and a pure re-statement (no
new root) is never admitted or re-forwarded.
\item[\textnormal{(P3)}] \emph{Exactness:} the admitted-view subspace at $i$ equals
$\sum_{\rho\in\mathrm{reach}(i)}\mathrm{row}(C_\rho)$, each native source credited once, independent
of path multiplicity.
\end{enumerate}
Consequently the criterion of Theorem~\ref{thm:ident} is an exact ``iff'' over native observers:
$\mathrm{ask}(q)$ returns the centralized optimum iff every native root spanning the identifiable
subspace lies in $i$'s carrier-graph component.
\end{lemma}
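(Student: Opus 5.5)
The plan is to treat content-addressed lineage as a function from observations to finite root sets and prove (P1)--(P3) in order, each using the previous one. The final ``consequently'' then follows by feeding (P3) into the per-field argument of Theorem~\ref{thm:ident}.

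\emph{(P1)} is immediate from the modelling assumptions. Forwarding re-emits admitted content and incorporates no new native datum, so by the rule that roots are minted only by native observations or remixes, no fresh root is created. Because identity is determined by content and lineage, and lineage records only the forwarding hop, the root set is unchanged.

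\emph{(P2)} will be a potential-function argument with potential $\Psi=\sum_i|R_i|$. $\Psi$ is bounded by $|V|\cdot|\mathcal R|$, where $\mathcal R$ is the finite set of roots minted by the native observations of the query's run.
\begin{enumerate}
\item An observation is admitted only if it is value-novel or source-novel. For forwarding, only source-novelty triggers re-emission under the non-selective source-novel rule, and source-novelty means $\mathrm{roots}(o)\not\subseteq R_i$. So each forwarding-triggering admission strictly increases $|R_i|$.
\item By (P1), a forwarded copy carries the same roots. Once admitted it is therefore not source-novel to any agent that already holds them, and echoes die.
\item Each admission triggers at most one forwarding event, so the number of events is at most the total increase in $\Psi$, that is, at most $|V|\cdot|\mathcal R|$.
\item A pure re-statement has $\mathrm{roots}\subseteq R_i$ and fails the source test. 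If it is also value-redundant, it is gated.
\end{enumerate}

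\emph{(P3)} is a two-inclusion argument on the admitted root set.
\begin{enumerate}
\item \emph{Containment} $R_i\subseteq\mathrm{reach}(i)$. A root reaches $i$ only along a chain of emissions. Every edge of that chain has an emitting endpoint for the relevant direction, so it is a carrier edge, and the chain lies in $i$'s component.
\item \emph{Reverse inclusion} $\mathrm{reach}(i)\subseteq R_i$ at termination. Induct along a carrier path from the observer $k$ of $\rho$ to $i$. If neighbour $j$ holds $\rho$, then non-selective forwarding makes $j$ emit $\rho$ (it admitted $\rho$ as source-novel). The next agent then either already holds $\rho$ or finds it source-novel and admits it. By (P2) the process stops, and at the fixed point no carrier edge separates holders from non-holders.
\item \emph{Single crediting.} The admitted-view subspace is $\sum_{\rho\in R_i}\mathrm{row}(C_\rho)$. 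Since admission is keyed by root and (P1) holds, multiple paths deliver the same $\rho$ but it is admitted once. Multiplicity therefore cannot double-weight a source.
\end{enumerate}

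For the consequence, I substitute (P3) into the per-field anchored Laplacian solve of Theorem~\ref{thm:ident}. The answer at $i$ equals the centralized optimum iff the rows available at $i$ span the identifiable subspace. By (P3) this holds iff the native roots spanning it lie in $i$'s component. The ``only if'' direction uses containment: a root outside the component never arrives, and the corresponding direction reverts to the prior.

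I expect the main obstacle to be the reverse inclusion of (P3). The carrier graph is defined by emissions, and those emissions themselves grow through forwarding, so there is a potential circularity. I would handle it by taking the terminal configuration $E_T$ of Lemma~\ref{lem:mmatrix} as the reference carrier graph and arguing at the fixed point. Any native observer's root that lies in the same $E_T$-component but is not yet held must still be source-novel across some emitting edge, which contradicts termination. A secondary subtlety is that remixes mint a root while still containing old ones. Because source-novelty uses $\not\subseteq$, a remix is admitted if and only if it carries its new root, and this is consistent with the counting in (P2).
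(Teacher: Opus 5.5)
Your proposal is correct and follows essentially the paper's route. You get (P1) because forwarding adds no native datum and so mints no root; (P2) by counting admissions against the monotone, bounded root sets $R_i$ (your potential $\sum_i|R_i|$ is the paper's per-agent count summed); and (P3) from single crediting plus identifying $\mathrm{reach}(i)$ with the roots carried along the carrier graph, with the resulting rank fed into the anchored solve of Theorem~\ref{thm:ident}. Your two-inclusion argument at the terminal configuration $E_T$ spells out what the paper asserts in one line (``forwarding extends $G_U$ precisely along carried roots''), and your value-novelty caveat in (P2) is unnecessary if you read admission as source-novel admission, as the lemma's ``each admission adds at least one new root'' presupposes and the paper's proof assumes, which then gives the full ``never admitted'' clause.
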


\begin{proof}
\emph{(P1)} By content-addressing, $\mathrm{roots}(\cdot)$ is a function of content and lineage.
Forwarding re-emits admitted content with its lineage unchanged (no new native datum), so the
re-emission shares the content-address and $\mathrm{roots}(\mathrm{forward}(o))=\mathrm{roots}(o)$;
a remix unions exactly one fresh root. Thus a forwarded direction retains its root identity across
arbitrarily many hops.
\emph{(P2)} A message is admitted as source-novel only if it carries a root $\notin R_i$, and
admission adds those new roots to $R_i$. Since $R_i\subseteq\mathcal{R}$ grows monotonically and roots
are never removed, agent $i$ performs at most $|\mathcal{R}|$ admissions, each triggering at most one
forwarding event; the total is $\le|V|\cdot|\mathcal{R}|$, giving termination. A message whose roots
all lie in $R_i$, a pure re-statement or a re-forwarded copy of an already-held root, fails
source-novelty and is neither admitted nor re-forwarded, so bundling old roots with new ones cannot
re-trigger the old.
\emph{(P3)} By (P1) every forwarded copy of $\rho$ carries $\rho$ and lies in $\mathrm{row}(C_\rho)$;
by (P2) $i$ credits $\rho$ once. Path multiplicity therefore cannot inflate rank, since a forwarded copy
shares its source's subspace, while distinct native roots contribute additively, yielding the
stated subspace.
Finally $\mathrm{reach}(i)$ is exactly the native roots in $i$'s carrier-graph component, since
forwarding (root-preserving by (P1)) extends $G_U$ precisely along carried roots; so the
admitted-view rank equals $\mathrm{rank}\sum_{\rho\in\mathrm{reach}(i)}\mathrm{row}(C_\rho)$, which
matches the full-view rank iff every spanning native root lies in $i$'s component. The anchored
min-norm solve then returns the centralized optimum on precisely that subspace.
\end{proof}

\begin{remark}[The mechanism is content-addressing]
(P1) is not an extra assumption: content-addressed lineage already determines a CMB's identity from
its content and lineage, so re-emitting admitted content yields the same address and the same roots.
The provenance the protocol carries for audit is exactly what makes source-novelty well-defined and
forwarding terminating, which is lineage doing one more job. We verify this in simulation: on a
cyclic mesh, inherited-root forwarding terminates with each node's admitted roots equal to its
component's native roots, whereas a fresh-root-per-hop rule (violating (P1)) echoes without bound, forwarding events growing
geometrically per round (e.g.\ $2,4,8,16,\dots$ on a degree-2 ring); and forwarded copies of one source
leave subspace rank unchanged while independent sources add rank.
\end{remark}

\begin{remark}[Per-subspace, not per-field]
Forwarding must carry the identifying direction, not merely fields it touches: forwarding one
axis of a two-axis direction fails. Field-typed emissions (CAT7's orthogonal projectors) give the
clean per-field instantiation, but the criterion is subspace coverage of $G_U$.
\end{remark}

\begin{remark}[Reconciliation with the anti-echo rule]
Source-novel forwarding re-emits content for which the forwarder has no new domain data, which the
mesh protocol's anti-echo rule forbids as paraphrase~\cite{mmp}. The resolution is the value/source
distinction on the emission side. A forwarded direction carries an inherited lineage root that is new to
the receiver, not a freshly minted one (Lemma~\ref{lem:rootid}, P1), even with no new value, so it is
not paraphrase. Lineage licenses the forwarding the anti-echo rule would otherwise prohibit, and the
rule's guarantee is preserved because pure re-statements (no new root) stay forbidden. The forwarding
rule should additionally be non-selective, so that the carrier graph connects (identification,
\S\ref{sec:b}) and the convergence rate is preserved (Lemma~\ref{lem:mmatrix}), a constraint the
anti-echo rule does not currently express.
\end{remark}

\section{Property (c): observation-only}
\label{sec:c}

\begin{proposition}[Non-transmission]
\label{prop:moat}
Every term of $\mathrm{ask}$'s update is a function of agent $i$'s own emission, its private anchor
$\tau_i$, its admitted observations, and its sovereign weights. No neighbor's hidden state,
parameters, or gradients appear in any message. Hence $\mathrm{ask}$ returns centralized-equivalent
retrieval (Theorem~\ref{thm:ident}) while no node ever \emph{transmits} its weights, gradients, or raw
state.
\end{proposition}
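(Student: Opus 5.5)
The plan is to prove Proposition~\ref{prop:moat} by inspecting the update $\dot o_i=-\eta\,\nabla_{o_i}L_i$ term by term, and then to combine that locality statement with Theorem~\ref{thm:ident}. I will begin by writing out the gradient of the local free energy with neighbor observations held fixed, as the definition of mesh inference prescribes:
\[
\nabla_{o_i}L_i=\sum_{j\in N_i}\sum_{f\in F} w_{ij}^f m_j^f\,P_f(o_i-o_j)+\mu_i(o_i-\tau_i)+\nu_i\big(o_i-\mathsf{mem}_i(q)\big).
\]
I will then classify every symbol that appears. $o_i$ is $i$'s own emission. $\tau_i,\mu_i,\nu_i$ are private to $i$. $w_{ij}^f=K_{ij}\alpha_{i,f}\beta(s_{i,f}(o_j))$ is built from $i$'s sovereign weights together with a relevance score computed by $i$ from the received $o_j$. $m_j^f$ is read off the typed envelope of $o_j$. $P_f$ is the public field projector. $\mathsf{mem}_i(q)$ is a read-out of $i$'s admitted-observation store, so it is a function of previously received $o_j$'s and the clamped question $q$. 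The only quantities in this list that originate at another agent are the $o_j$ and their carriers. So the update is a function of $(o_i,\tau_i,x_i,\alpha_i,\{o_j\}_{\text{admitted}},q)$, which is the first claim.

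Next I will show that nothing besides $o_j=g_{x_j}(h_j)$ and its content-addressed lineage ever has to cross the wire. The argument is that $i$'s update requires only the items listed above. Any term that needed $h_j$, $x_j$, or $\nabla_{o_j}L_j$ would have to appear as a cross-derivative, but each agent descends only its own $L_i$ and treats $o_j$ as data, so no such term arises. Forwarding must also be covered, since it is the one emission-side addition. By Lemma~\ref{lem:rootid} (P1), a forwarded message is admitted content re-emitted with its lineage unchanged. It is therefore again an observation plus lineage, not an internal. The same holds for the asynchronous realization of Lemma~\ref{lem:async}, where stale reads are still reads of emitted $o_j$. The centralized-equivalence clause will then follow directly. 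The dynamics whose limit Theorem~\ref{thm:ident} identifies as the centralized optimum is exactly this message-only update, so the optimum is attained with no transmission of weights, gradients, or raw state.

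The step I expect to be the main obstacle is being precise about what ``transmits'' means, as opposed to what can be inferred. The $o_j$ is a deterministic function of $h_j$, and lineage is a global side-channel. I will state explicitly that the proposition is a \emph{syntactic} non-transmission claim about the message alphabet, $\{o_j\}\cup\{\text{lineage}\}$. It is not an information-theoretic claim that $h_j$ is unrecoverable, and that recoverability question is deferred to the rank/span duality later in \S\ref{sec:c}. A second point needs checking: the relevance score $s_{i,f}$ and the memory read-out must not secretly query a neighbor, for example through a request for $j$'s attention state. I will handle this by noting that the policy $\mathcal A$ is receiver-autonomous by definition, so $s_{i,f}$ is computed by $i$ from $o_j$ and $i$'s own anchors.
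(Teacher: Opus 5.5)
Your proposal is correct and takes the paper's route. The paper justifies the claim only as structural, true by construction of the wire format: only $o_j$ and lineage cross the wire, and each agent descends its own $L_i$ treating $o_j$ as data. It imports centralized equivalence from Theorem~\ref{thm:ident}, and your explicit gradient and symbol-by-symbol classification spell this out. Your syntactic reading, with recoverability deferred to Prop.~\ref{prop:confid}, matches the paper exactly; just state that the centralized-equivalence clause inherits Theorem~\ref{thm:ident}'s hypotheses (provenance-aware admission, non-selective source-novel forwarding, carrier connectivity) rather than holding unconditionally.
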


Non-transmission is structural, true by construction of the wire format, but strictly weaker than
confidentiality: an adversary need not be sent a private quantity to reconstruct it from what is sent.
The honest question is whether the emission record identifies the private state, and the answer is
Property~(b) read with the opposite sign.

\begin{definition}[Probe-span]
\label{def:probe}
Write the influence of $i$'s private vector $\theta_i\in\mathbb{R}^d$ (the part of its data and state
that drives its emissions) as $o_i=C_i\theta_i$, lossy with $\operatorname{rank}C_i\le p<d$. An
adversary coalition induces a set $Q$ of queries and observes the \emph{converged answers}
$\mathrm{ask}$ returns: one rank-$\le p$ image $C_{i,q}\theta_i$ per query, not the relaxation
transient. The stacked record is $C_{\mathcal A}\theta_i$, $C_{\mathcal A}=[\,C_{i,q}\,]_{q\in Q}$, and
the adversary's \emph{probe-span} is $\mathcal P_{\mathcal A}=\operatorname{row}(C_{\mathcal A})
\subseteq\mathbb{R}^d$.
\end{definition}

\begin{proposition}[Confidentiality is identification with the opposite sign]
\label{prop:confid}
Against an adversary that observes the converged answers (Def.~\ref{def:probe}), the component of
$\theta_i$ in $\mathcal P_{\mathcal A}^{\perp}$ is non-identifiable: any $\theta_i'$ with
$\theta_i'-\theta_i\in\mathcal P_{\mathcal A}^{\perp}$ yields an identical record on $Q$, so no
estimator, at any computational budget, separates them. A single query hides a subspace of dimension
$\ge d-p$, and which directions stay hidden is a structural property of the realizable query family.
The result is a \emph{dichotomy}: either the induced probes span $\mathbb{R}^d$, and $\theta_i$ is
reconstructed exactly by generically $|Q|\gtrsim d/p$ queries (\emph{model inversion}), or the clamp
structure confines them to a proper subspace (\emph{probe-rank deficiency}), and the complementary
directions are confidential at any budget.
\end{proposition}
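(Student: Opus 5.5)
The argument is linear algebra on the stacked record $r=C_{\mathcal A}\theta_i$ of Def.~\ref{def:probe}. It has three steps: non-identifiability of the hidden component, the dimension count for a single query, and the dichotomy.

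\emph{Non-identifiability.} Use the decomposition $\mathbb{R}^d=\mathcal P_{\mathcal A}\oplus\mathcal P_{\mathcal A}^{\perp}$. By definition $\mathcal P_{\mathcal A}^{\perp}=\ker C_{\mathcal A}$, since the kernel of a matrix is the orthogonal complement of its row space. So for any $\theta_i'$ with $\theta_i'-\theta_i\in\mathcal P_{\mathcal A}^{\perp}$ we get $C_{\mathcal A}\theta_i'=C_{\mathcal A}\theta_i$, and every block $C_{i,q}\theta_i'$, $q\in Q$, agrees with the true one.

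The one modelling point to make explicit is that the adversary sees nothing but $r$. Def.~\ref{def:probe} restricts observation to converged answers, and each converged answer is a deterministic function of the clamp $q$ and of $i$'s emitted image $C_{i,q}\theta_i$. This uses Prop.~\ref{prop:moat} (no other channel carries $\theta_i$) and the uniqueness of the $\mathrm{ask}$ fixed point (the Proposition of \S\ref{sec:a} together with Lemma~\ref{lem:mmatrix}). Two private states with equal records therefore induce identical observation distributions. Hence any estimator, randomized or not and of unbounded complexity, is a function of $r$ and cannot separate them. This is information-theoretic, so it holds at any budget.

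\emph{Single query.} For one query, $\mathcal P_{\mathcal A}=\mathrm{row}(C_{i,q})$ has dimension $\operatorname{rank}C_{i,q}\le p$. Its complement therefore has dimension at least $d-p>0$.

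\emph{Dichotomy.} The subspace $\mathcal P_{\mathcal A}$ either equals $\mathbb{R}^d$ or is proper, so the two cases are exhaustive.
\begin{itemize}
\item If it is all of $\mathbb{R}^d$, then $C_{\mathcal A}$ has full column rank, so $C_{\mathcal A}^{\top}C_{\mathcal A}$ is invertible and $\theta_i=(C_{\mathcal A}^{\top}C_{\mathcal A})^{-1}C_{\mathcal A}^{\top}r$. This is exact reconstruction, i.e.\ model inversion. Since $\operatorname{rank}C_{\mathcal A}\le p|Q|$, at least $|Q|\ge\lceil d/p\rceil$ queries are necessary. Generically this many suffice: if the row spaces $\mathrm{row}(C_{i,q})$ are in general position, their sum has dimension $\min(d,\sum_q\operatorname{rank}C_{i,q})$, since the non-generic configurations form a proper algebraic subset.
\item If it is proper, the hidden directions are exactly $\big(\sum_{q\in Q}\mathrm{row}(C_{i,q})\big)^{\perp}$. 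This set depends only on which maps $C_{i,q}$ the clamp structure can realize, and not on $\theta_i$ or on the adversary's computation. That is the sense in which it is structural.
\end{itemize}
This is the identification picture of Theorem~\ref{thm:ident} with the sign reversed: identification asks whether the union of row spaces is full rank, and confidentiality asks whether it is deficient.

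\emph{Main obstacle.} I expect the hardest step to be justifying that the converged answer is a function only of $C_{i,q}\theta_i$ and of public quantities. At equilibrium, $i$'s anchor $\tau_i$ and memory term feed into the relaxation, and through the coupling they enter neighbours' coordinates. To handle this, I would absorb all of $i$'s private influence on the linear fixed point $o^\star=M^{-1}g$ into $\theta_i$. Then $o^\star$ is affine in $\theta_i$, with a linear part of rank at most $p$ per query because it passes through $i$'s $p$-dimensional emission. I would define $C_{i,q}$ as that linear part, and the kernel argument above goes through unchanged.

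Finally, I would note two limits of the claim. It does not cover adversaries who observe the transient, and it does not cover prior knowledge of $\theta_i$. The Definition excludes both.
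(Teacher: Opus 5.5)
Your route is the paper's. The record is $C_{\mathcal A}\theta_i$, the indistinguishable set is the coset $\theta_i+\ker C_{\mathcal A}=\theta_i+\mathcal P_{\mathcal A}^{\perp}$, and everything else is read off $\operatorname{rank}C_{\mathcal A}$. Your explicit least-squares inverse and the necessity bound $|Q|\ge\lceil d/p\rceil$ are fine sharpenings.

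The step that fails as written is the second branch of the dichotomy. You split on whether $\mathcal P_{\mathcal A}$ for the \emph{given} $Q$ is all of $\mathbb{R}^d$. But a proper $\mathcal P_{\mathcal A}$ also arises when the adversary has simply not yet asked enough queries, and then further queries do shrink the hidden set. So your claim that $\big(\sum_{q\in Q}\mathrm{row}(C_{i,q})\big)^{\perp}$ depends only on what the clamp structure can realize is false: it depends on $Q$. As a result, \emph{confidential at any budget} does not follow from your argument.

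The dichotomy has to be taken over the realizable query family $\mathcal F$. Let $S=\sum_{q\in\mathcal F}\mathrm{row}(C_{i,q})$. Adding queries greedily, each one that enlarges the span adds at least one dimension, so $S$ is reached by at most $d$ queries. If $S=\mathbb{R}^d$, some finite $Q$ spans (generically $\lceil d/p\rceil$ of them, by your general-position argument), and inversion follows. If $S\ne\mathbb{R}^d$, then $\mathcal P_{\mathcal A}\subseteq S$ for every $Q\subseteq\mathcal F$. Every answer to every realizable query is then blind to $S^{\perp}$, so no strategy recovers it, adaptive or not, at any budget. This is what the paper means by the rank staying $<d$ permanently when the clamp structure leaves a fixed unprobed subspace. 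Defining $S$ at the family level also settles adaptivity cleanly, which a fixed-$Q$ argument leaves implicit.

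On your \emph{main obstacle}: the paper does not derive the linear record from the dynamics at all. It takes $C_{\mathcal A}\theta_i$ as given by Def.~\ref{def:probe}. Your derivation does work for private quantities that enter the right-hand side $g$, namely the anchor $\tau_i$ and the memory read-out. These reach $o^\star=M^{-1}g$ through a $p$-dimensional block, which gives rank $\le p$ per query. It does not cover \emph{all} of $i$'s private influence, however. The sovereign weights $\alpha_{i,f}$ and the gate scores sit in $M$, not in $g$, so $o^\star$ is not affine in them. You therefore still need the same stipulation the paper uses, that $\theta_i$ enters only linearly through $g$ with $M$ independent of it, rather than having removed it.
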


\begin{proof}
By Def.~\ref{def:probe} the record is exactly $C_{\mathcal A}\theta_i$, so it depends on $\theta_i$ only
through its image under $C_{\mathcal A}$; two private vectors with equal image are observationally
identical, and the indistinguishable set is the coset $\theta_i+\mathcal P_{\mathcal A}^{\perp}$ of
dimension $d-\dim\mathcal P_{\mathcal A}$. The dichotomy reads off $\operatorname{rank}C_{\mathcal A}$:
$\le p$ for one query; $=d$ once the induced family spans, which generic queries reach at
$|Q|\approx d/p$; and $<d$ permanently when the clamp structure leaves a fixed unprobed subspace, whose
complement no budget recovers. Additive emission noise leaves $\mathcal P_{\mathcal A}$ intact, since
least squares stays consistent on the spanned subspace, so only noise scaled to a query budget
(differential privacy) degrades reconstruction there. Scoping the adversary to the converged output is
what makes this exact. A warm start $o_i(0)=B\theta_i$ watched through the transient would add the mode
$e^{-Mt}o_i(0)$ and leak up to $\operatorname{rank}B$ beyond $C_{\mathcal A}$, but the query API exposes
only the settled answer, not the relaxation path, so that channel is closed by construction.
\end{proof}

\noindent The bound also covers a participating node that watches relaxation traffic, not only
an external querent reading answers: under cold start or warm start from a prior equilibrium,
$o_i(0)$ is either $\theta_i$-independent or a previous query's image already in
$\mathcal P_{\mathcal A}$, so the transient mode contributes nothing beyond $C_{\mathcal A}$. The
rank-$B$ leak needs a $\theta_i$-dependent high-rank initialization the construction never uses.

The \emph{rank} condition mirrors Theorem~\ref{thm:ident} exactly: identification needs the union of
views full-rank, and confidentiality is the same rank condition with the sign flipped, the private
subspace lying outside the adversary's accumulated span. The connectivity half is an analogy rather than
an identity, since the adversary's probe-span is fixed by the realizable query family, not by the
carrier graph; carrier disconnection, which defeats identification, only loosely mirrors the probe-rank
deficiency that shields private state. The reading is unsparing.
``Observation-only'' guarantees non-transmission (Prop.~\ref{prop:moat}) and per-query lossiness, not
unconditional secrecy. On the spanning branch of Prop.~\ref{prop:confid} an adaptive querent mounts a
model-inversion attack, while state and parameters that never enter any emission image stay hidden
unconditionally. The only defences for the queryable part are structural probe-rank deficiency, a
private subspace no admissible query can excite, or query-budgeted noise. The compression axis of
\S\ref{sec:cost} is precisely this leakage counted in bits.

This is the guarantee a hosted model structurally cannot match: no raw state, gradients, or weights
ever leave a node. We state it with its exact limit rather than as an unconditional moat.

\section{The unified characterization}
\label{sec:unified}
All three are governed by one object, the admission/emission policy, and together they make one thing:
a sovereign collective that derives what no member holds (Def.~\ref{def:ci}), reliably, and without
exposing any member. Convergence holds for free. The other two form a dual pair. Identification asks
when the collective recovers that answer; confidentiality asks when an adversary cannot recover a
private one. Both reduce to the same rank condition read with opposite signs (Fig.~\ref{fig:duality}):
\begin{itemize}
\item convergence is unconditional: the coupling is always an M-matrix (Lemma~\ref{lem:mmatrix}), so
it holds for any weights. Reciprocity and non-selective forwarding govern only whether a common
potential exists and how fast it converges, never whether it does;
\item identification-completeness needs source-preserving admission and source-novel forwarding with
carrier-graph connectivity (Theorem~\ref{thm:ident}). This is the collective-intelligence condition:
$\mathrm{ask}$ realizes the closure gap $\mathcal N$, deriving the answer no member holds, exactly where
the views that compose it are carrier-connected to the receiver;
\item observation-only is structural as \emph{non-transmission} (Prop.~\ref{prop:moat}) and, as
\emph{confidentiality}, the dual of identification (Prop.~\ref{prop:confid}). A private subspace stays
hidden exactly where the adversary's probes fail to span it, the same connectivity-and-rank condition
turned to the defender's favour.
\end{itemize}

\begin{figure}[h]\centering
\includegraphics[width=0.82\linewidth]{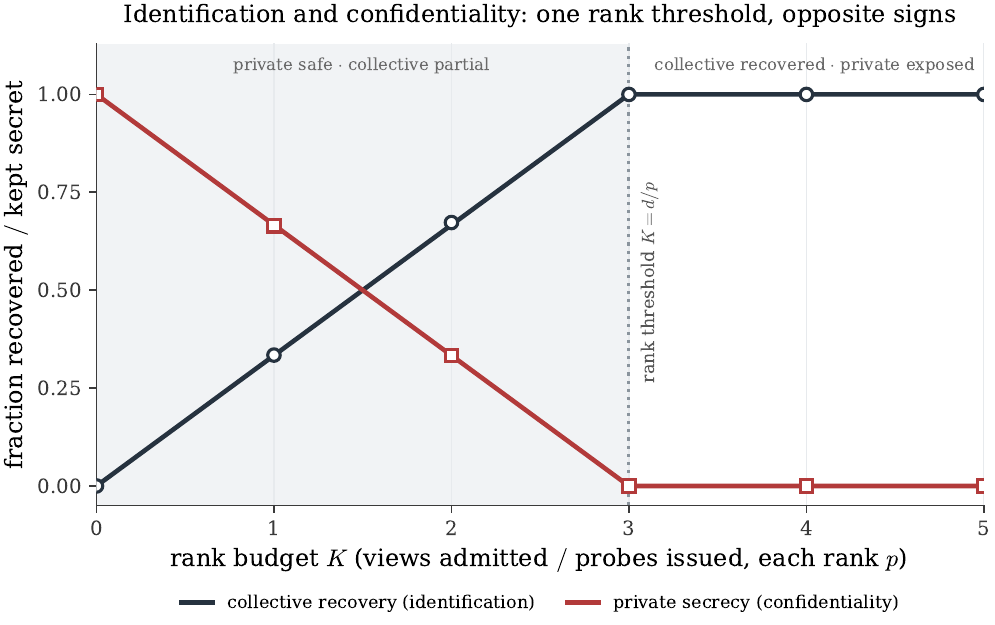}
\caption{\textbf{Identification and confidentiality are one rank threshold, read with opposite signs.}
A rank budget $K$ of rank-$p$ views ($d{=}6$, $p{=}2$) splits between the latent energy the mesh recovers
(identification, rising) and the private-direction energy an adversary cannot reconstruct
(confidentiality, falling); the two sum to one at every $K$ (markers measured, lines the rank law). At
$K{=}d/p$ the collective is fully identified and the private direction fully exposed---the condition that
completes collective recovery is the one that ends privacy.}
\label{fig:duality}
\end{figure}

\begin{table}[h]\centering\small
\begin{tabular}{@{}lll@{}}
\toprule
Property & Governing condition & Status \\
\midrule
Convergence & none: the coupling is always an M-matrix (Lem.~\ref{lem:mmatrix}) & unconditional\;[proven] \\[2pt]
Identification- & provenance-aware admission with non-selective source-novel & iff\;[proven] \\
completeness & forwarding; carrier-graph connectivity (Thm~\ref{thm:ident}) & \\[2pt]
Observation-only & non-transmission, structural (Prop.~\ref{prop:moat}); confidentiality & structural $+$ \\
 & $=$ probe-rank deficiency, the dual of identification (Prop.~\ref{prop:confid}) & conditional\;[proven] \\
\bottomrule
\end{tabular}
\caption{One policy, three properties. Convergence is free; identification and confidentiality are a
single rank-and-connectivity condition read with opposite signs.}
\label{tab:unified}
\end{table}

The global information that local agents cannot compute, source-independence and reciprocity, is
carried by \emph{content-addressed lineage}. It is load-bearing four times: constitutive for
source-preserving admission, constitutive for legitimate (source-novel) forwarding, the detector of
selective forwarding that would thin the carrier graph or slow the rate, and the carrier of the audit
trail. Lineage is constitutive for capability (admission and forwarding) and diagnostic for the rate
and connectivity it cannot itself enforce. These are distinct strengths, stated as such.

\noindent\textbf{The characterization.} \emph{Provenance-aware, source-preserving admission together
with non-selective source-novel forwarding makes $\mathrm{ask}$ identification-complete and
observation-only on the collectively-identifiable Gaussian class, with convergence holding
unconditionally (Lemma~\ref{lem:mmatrix}) and non-selectivity preserving its rate, and with lineage as
the single side-channel that lets purely local agents satisfy a set of globally-defined conditions.}
This is collective inference as a characterized operation. A sovereign population derives
what it jointly knows but no member holds, the answer the centralized optimum names, realized reliably
and without exposing any member, on the linear--Gaussian class. It is the inference half of a learning
loop (\S\ref{sec:loop}): one proven turn of a collective that also acts, learns, and feeds back.

\section{Experiments}
This section is mechanism verification: the does-the-theory-hold checks. The performance evaluation
(accuracy, latency, baselines) is \S\ref{sec:perf}. All experiments are synthetic and reproducible. The
accuracy experiment of \S\ref{sec:perf} is noisy linear--Gaussian (the centralized optimum is its noise
floor); the connectivity, forwarding, M-matrix, and latency experiments are exact (noiseless) algebraic
solves.

\paragraph{The intermediate that knows but does not say.}
On a chain $a\!-\!b\!-\!c$ where the rank-critical direction is observed only by $a$ and needed by
$c$, with $b$ not natively observing it: when $b$ admits the direction (recovering it exactly in its
own estimate) but does not re-emit it, $c$ recovers nothing; when $b$ forwards it, $c$ recovers it
exactly. The admission edge was open throughout, so the failure is emission, which isolates the
forwarding obligation cleanly. Carrier-graph connectivity predicts the linear solver exactly across
chains and a $3\times3$ grid, and the quantitative sweep over a forwarding probability $p$ shows
recovery exact while the carrier graph is connected, collapsing to the isolated level only as
connectivity is lost (Fig.~\ref{fig:fig1}).
\begin{figure}[h]\centering
\includegraphics[width=0.72\linewidth]{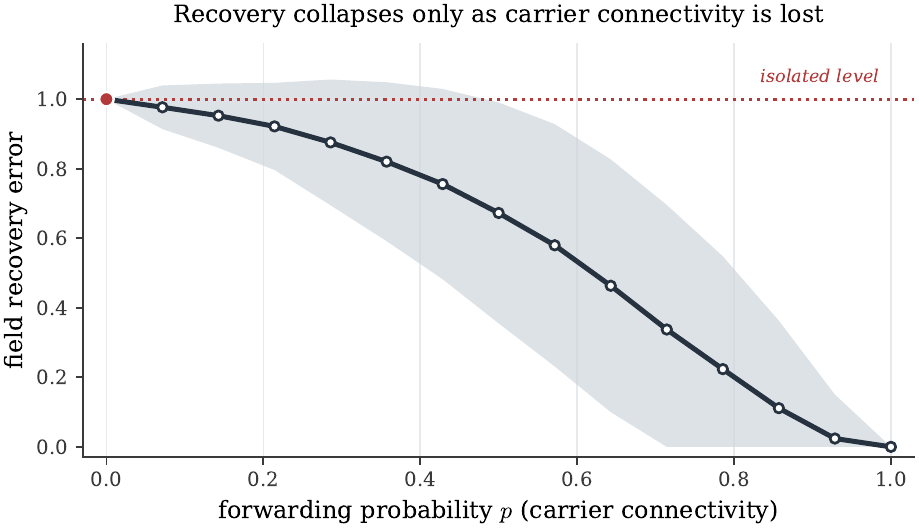}
\caption{Recovery is governed by carrier connectivity (logged; mean $\pm$ s.d.\ over $400$ runs). A
rank-critical field must cross intermediate agents; recovery error is exact while the carrier graph
stays connected ($p\!\to\!1$, non-selective forwarding) and rises to the isolated level only as
connectivity is lost ($p\!\to\!0$). Asymmetry alone, whether reciprocity-breaking or selective
over-emission, does not collapse recovery (Lemma~\ref{lem:mmatrix}); only disconnection does.}
\label{fig:fig1}
\end{figure}

\paragraph{Forwarding and convergence.}
Across $600$ frontier configurations (chains and grids, random partial-emission frontiers), every
carrier-masked operator is a diagonally dominant Z-matrix with $\mathrm{Re}(\lambda)\ge0$, and the
switched propagation converges to machine precision. Transient amplification is finite and
configuration-dependent---a gain-4 asymmetric carrier reaches $\sup_t\|e^{-M^f t}\|\approx4.2$---confirming
the finite transient of Lemma~\ref{lem:mmatrix} without entering its proof. Selective
over-emission does not break this. It drives the coupling's symmetric part indefinite past the anchor
margin (a two-node imbalance with gain $4$ gives $\lambda_{\min}(M_{\mathrm{sym}})=-0.30$), but $M$'s own
eigenvalues stay in the right half-plane (there, $5.41$ and $0.092$), so the operator remains an
M-matrix and converges. Over-emission only shrinks the spectral gap, slowing convergence (verified up to
gain $20$, where $\min\mathrm{Re}\,\lambda=+0.023$). Reciprocity-breaking is likewise benign for
accuracy: recovery error stays flat across the full asymmetry sweep, never approaching the isolated
level. The single mechanism that collapses recovery is carrier disconnection (Fig.~\ref{fig:fig1}).

\paragraph{The phase-encoded case.}
With phase emissions the relaxation is the Kuramoto model; coherence $r$ is negligible below the
mean-field critical coupling $K_c\approx1.60$ and rises sharply above it, making the coupling gain an
explicit design target: set it above the threshold and the mesh phase-locks.

\section{Performance: accuracy, latency, and leakage}
\label{sec:perf}
The framing of $\mathrm{ask}$ as equilibrium inference invites the questions an inference engine is
actually judged on: how accurate its answers are against a gold standard, how fast it returns them, and,
for a sovereign engine, how much they leak. We report all three on the regime we prove. Accuracy and
latency are measured against two baselines: the centralized optimum (one solver pooling every view, the
standard a sovereign mesh aims to match) and the isolated agent (no sharing, what it must beat).
Confidentiality is measured against an adaptive querent (\S\ref{sec:c}).

\emph{Accuracy.} Under carrier connectivity, $\mathrm{ask}$ realizes the collective answer, which the
centralized optimum names, with zero error (Theorem~\ref{thm:ident}). Consider a task no single agent
can solve: $N{=}20$ agents observe a shared latent through rank-deficient views (each rank $2\ll6$, the
union full rank), sharing only estimates, never views, models, or gradients. The noisy mesh reaches
mean error $0.041$, matching the centralized $0.041$ exactly, $20\times$ below the isolated $0.805$
(Fig.~\ref{fig:perf}a). Sovereignty costs nothing in accuracy: the forwarding fixed point is the
centralized optimum exactly (Thm.~\ref{thm:ident}); the cost is the $O(\mathrm{diam}^2)$ latency of Fig.~\ref{fig:perf}b. It also degrades gracefully:
as forwarding drops, accuracy falls to the isolated level exactly as carrier connectivity is lost, with
no intermediate cliff (Fig.~\ref{fig:fig1}). Disconnection reverts the unreachable coordinates to the
prior, so the answer is less complete, not pulled toward a wrong value.

\emph{Latency.} Latency is the price of removing the center, and we state it rather than hide it.
Retrieval is graph heat-diffusion: the slowest mode is the Dirichlet ground mode
$\lambda_{\min}\approx(\pi^2/4)/N^2$, so convergence takes $O(N^2)$ iterations in mesh diameter
(Fig.~\ref{fig:perf}b; measured log-log slope $1.9$), against a centralized solve's single round. The
rate is set by geometry, not anchor strength; denser validated anchoring shortens it to $O(r^2)$ in
distance-to-nearest-anchor (Corollary~\ref{cor:rate}), so more validated memory trades directly for
lower latency. This $O(\mathrm{diam}^2)$ is the cost of the diffusion solver (Jacobi relaxation), not a lower bound on the
inference: it reduces to $O(\mathrm{diam})$ under a lineage-routed message-passing realization---exact on
tree-like carriers, with loopy carriers inheriting the standard Gaussian belief-propagation
convergence caveat---whose mechanism, floor, and scope we state in \S\ref{sec:scope}.

\emph{Leakage.} A sovereign engine is also judged on what repeated queries reconstruct. Each ask
exposes a rank-$\le p$ projection of a node's private store (Prop.~\ref{prop:confid}), and an adaptive
querent accumulates them. Simulated against a node (Fig.~\ref{fig:perf}c), reconstruction error falls
from the single-query floor, a hidden subspace of dimension $\ge d-p$, as the querent's probes
accumulate, reaching machine precision once they span the store at $K\approx d/p$ queries (model
inversion). A probe-rank-deficient direction, never an answerable coordinate of any admissible query,
stays hidden at every budget. Leakage is thus the dual of identification, dialed by the query
budget (\S\ref{sec:cost}), not the all-or-nothing secrecy a one-line ``observation-only'' claim would
imply.

\emph{Scope.} This is performance on the proven linear--Gaussian turn. It is not language-model-grade,
and we do not present it as such. The non-convex associative regime, where the loop generalizes a
learned answer to a similar problem (\S\ref{sec:loop}), is where the open problem lives, and we name it
there rather than measure it here.

\begin{figure}[h]\centering
\includegraphics[width=\linewidth]{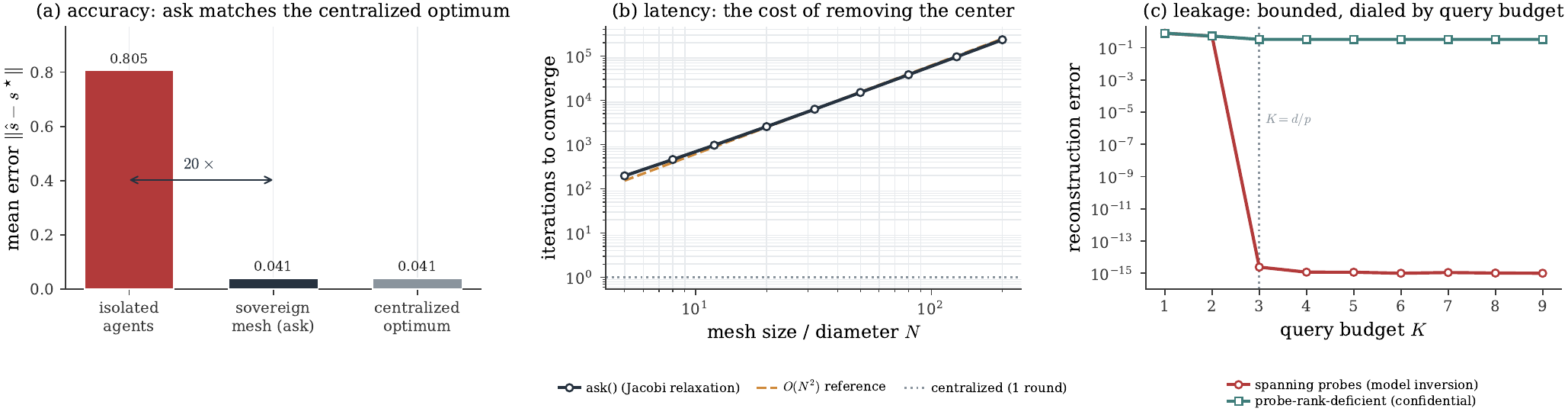}
\caption{Performance on a controlled linear--Gaussian inference task. (a) Accuracy: the sovereign mesh
($0.041$) matches the centralized optimum ($0.041$), $20\times$ below isolated agents ($0.805$); the
$N{=}20$ recovery task, sharing only estimates. (b) Latency: iterations for the Jacobi relaxation to
converge versus mesh size $N$ (anchored chain, diameter $N$), with the $O(N^2)$ reference and the
single-round centralized cost. The cost of removing the center is quadratic in the effective diameter
(the distance to the nearest anchor), independent of anchor strength. (c) Leakage: relative reconstruction error of a node's private vector ($d{=}6$) versus
query budget $K$, each ask leaking a rank-$p$ ($p{=}2$) projection. Spanning probes reconstruct it at
$K\!\approx\!d/p$ (model inversion); a probe-rank-deficient family, blind to a fixed private direction,
plateaus at the floor and never recovers it, confidentiality dialed by the budget.}
\label{fig:perf}
\end{figure}

\section{The cost of the construction: a compression--identification--rate frontier}
\label{sec:cost}
The construction has a genuine internal tension, and we state it rather than hide it. Non-selective
forwarding (needed for convergence) and source-novel forwarding (needed for identification) both grow
the emission dimension: an agent ends up emitting every source-novel direction it has admitted, pushing
$p$ toward $d$.

The two halves of Property~(c) (\S\ref{sec:c}) land on opposite sides of this frontier, and saying so
sharpens it. \emph{Non-transmission} (Prop.~\ref{prop:moat}) stands off the frontier: its proof never
invokes $p<d$, so it holds for any emission dimension, $p=d$ included, and sovereignty is invariant.
\emph{Confidentiality} (Prop.~\ref{prop:confid}) is the compression/privacy corner: a richer emission
($p\to d$) widens each per-query image $C_{i,q}$, so the adversary's probe-span covers more of the
private $\theta_i$ and the confidential subspace shrinks. What $p\to d$ trades is therefore
reconstruction-resistance, privacy counted in bits, and bandwidth, not sovereignty. The qualitative
guarantee (non-transmission) is invariant while the quantitative one (confidentiality) moves along the
frontier. The frontier is thus three-way, with non-transmission standing off it and confidentiality on
its privacy corner:
\begin{center}
\emph{compression/privacy} (small $p$, low leakage) \;vs\; \emph{identification} (needs forwarding for
carrier connectivity, grows $p$) \;vs\; \emph{convergence rate} (non-selective forwarding keeps the
terminal coupling symmetric, so a potential exists at the terminus; selective forwarding only slows, never destabilizes).
\end{center}
Anchor budget is the dial that places a deployment on this frontier: denser anchoring shortens
retrieval latency (Corollary~\ref{cor:rate}, via a smaller effective radius $r$), while the anchor
strength $\mu$ sets the margin tolerating selectivity (hence smaller $p$, less leakage) before the
convergence rate degrades (convergence itself always holds, Lemma~\ref{lem:mmatrix})---both at the
cost of more validated observations. The frontier is the honest
limit of the construction. It is also where the associative regime (\S\ref{sec:scope}) earns its place,
on capacity rather than latency: one-step associative retrieval escapes the diffusion latency that
compression aggravates only in the centralized setting, while the sovereign center-free reduction pays
the same $O(\mathrm{diam})$ floor (\S\ref{sec:scope}).

\section{The learning loop}
\label{sec:loop}
\noindent\emph{Status.}\enspace The results of this paper (\S\ref{sec:a}--\S\ref{sec:cost}) are
one characterized turn of the mesh, proven in the linear--Gaussian regime. This section gives the
operational model that turn sits inside. Its dynamics are architecture and open analysis, not theorems,
and every claim below is marked.

\medskip
A deployed mesh does not freeze. An agent asks, and the collective returns its current best answer
(proven). The agent applies that answer in the world. What the application teaches is new
information, earned empirically rather than produced by the mesh. The agent feeds it back as a fresh
observation, which admission, source-novel forwarding, and lineage (\S\ref{sec:b}--\S\ref{sec:c}, the
plumbing those sections build) carry to the others. The collective closure can only grow, so the next ask,
perhaps for a similar problem, draws on at least as much as the last.

\paragraph{The dynamics \textnormal{[framework]}.} Give each view a time index $V_i(t)$, and write
$V_i(t{+}1)=V_i(t)\cup\{\Delta_i(t)\}$, where $\Delta_i(t)$ is the constraint agent $i$ learned by
applying its answer, sourced from the world rather than the mesh. Then $\langle\bigcup_i V_i(t)\rangle$
is nondecreasing, and an ask at $t$ realizes the closure as it stands (Theorem~\ref{thm:ident}, turn by
turn). This is the two-phase shape of equilibrium propagation and active
inference~\cite{scellier,friston}. The ask is the inference phase, which reports what the collective
currently knows; the feedback is the learning phase, which grows what it knows. Because views only grow
and the space is finite-dimensional, the closure stabilizes in finitely many steps. The open question is not
whether the loop settles but whether what it settles on is correct.

\paragraph{Why the loop is honest, not magical \textnormal{[the boundary]}.} The turn manufactures no
information: $\mathcal N\subseteq\langle\bigcup_i V_i\rangle$, proven (\S\ref{sec:b}). The collective
grows only because agents act in the world and feed back what acting taught them. Information enters
through $\Delta_i(t)$, never from the mesh. The mesh is an accumulator of applied collective learning,
not an oracle. This is a stronger and more defensible claim than ``the collective knows more than its
parts,'' which stays false at every $t$.

\paragraph{Generalization is the open frontier \textnormal{[open]}.} A similar problem is the step the
linear turn does not cover. Matching a learned answer to a problem it was not learned on is the
non-linear $\langle\cdot\rangle$: the modern-Hopfield and attention regime, an associative energy with
exponential capacity and one-step retrieval~\cite{densemem,hopfield}. In our construction it is the
single substitution of the memory term $\tfrac12\nu_i\|o_i-\mathsf{mem}_i(q)\|^2$ by a log-sum-exp
associative energy over the admitted-observation store. There the collective can derive an upgraded
answer beyond any linear combination of views, and that answer can be a confident error: a generative
closure derives new knowledge that is true or false, and the loop has no guard that says which.
Determination stops being automatic, which is the paper's central open problem (\S\ref{sec:scope}).

\section{Scope and the open problem}
\label{sec:scope}
What is proved is the linear--Gaussian turn. Here $\mathrm{ask}$ is a convergent
conditional-mean engine that realizes the collective closure $\langle\bigcup_i V_i\rangle$, is
identification-complete up to collective identifiability, is observation-only, runs at
$O(\mathrm{diam}^2)$ latency, and makes every derived answer determined, hence equal to the centralized optimum. What is
framework is the loop of \S\ref{sec:loop}: the views move, the closure grows by applied feedback, and a
similar problem invokes the non-linear closure. The line between them is the line between this paper and
its program.

\paragraph{The central open problem: generalization correctness.} When the closure is non-linear, the
collective can derive an upgraded answer to a problem no member solved, and it can derive a wrong one
with the same confidence. The open problem is when asking improves the collective rather than corrupting
it: the conditions under which a generalized answer is guaranteed correct. This is the non-linear analog
of the carrier connectivity that makes every answer in the linear turn determined. There is a sharper,
dynamical facet. Coupled asynchronous inference and feedback could amplify an error: one fed back,
re-admitted, forwarded, and compounded. So correctness must hold not just per answer but under the
loop's own circulation.

\paragraph{Feasibility is not the obstacle.} The non-linear regime's distributed normalization is not an
open problem. The softmax is an associative, center-free reduction~\cite{onlinesoftmax,ringattn}, not
our result and not pursued here. What remains is the sovereign price on the two frontiers the proven results
already characterize. Emitting the associative regime's partial statistics is the model-inversion leak
of \S\ref{sec:c} (the secure-aggregation threat~\cite{secagg}), and a center-free reduction pays the
$O(\mathrm{diam})$ latency floor. On the linear side, the $O(\mathrm{diam}^2)$ diffusion solver already
reduces to that floor under a lineage-routed Gaussian message-passing realization, which is convergent
under the same M-matrix condition (Lemma~\ref{lem:mmatrix}) and exact on tree-like
carriers~\cite{lav,walksum}. These are the quantitative open items. Generalization correctness is the
qualitative one, and the one that decides whether the loop is collective intelligence or confident
noise.

\paragraph{Integrity under strategic agents.} The proven properties characterize a mesh whose agents
execute the admission/emission policy as specified. Convergence survives deviation---selective and
over-emitting carriers still settle (Lemma~\ref{lem:mmatrix})---but capability does not. A
self-interested participant can gate selectively to starve a rival's view, withhold a forward it owes,
or emit observations its own private state does not support, skewing or starving the collective answer.
The construction blunts the manipulation it can see: selective forwarding is lineage-detectable
(\S\ref{sec:unified}), over-emission only shrinks the spectral gap without destabilizing
(Lemma~\ref{lem:mmatrix}), and the symmetric regime descends a common potential (\S\ref{sec:a}), the
structure potential games~\cite{potential} exploit. But content-addressed lineage authenticates
provenance, not veracity, and detection is not deterrence: a common potential is not a proof that honest
emission is each agent's best reply. The open layer is mechanism design---conditions on the policy, or
on a lineage-backed penalty, under which truthful admission and non-selective forwarding are
incentive-compatible, an equilibrium no rational or adversarial agent profitably leaves. We supply the
detector; that detection suffices to keep the mesh honest is unproven.

\section{Conclusion}
Mesh inference can be made a well-posed operation. The mesh clamps the question, relaxes a local free
energy, and reads the answer. What comes back is what the population knows collectively: the answer its
members jointly determine but none holds alone, computed with no center and no member exposed. The
conditions under which a sovereign collective realizes that answer reduce to a single characterization
of the admission/emission policy (\S\ref{sec:unified}), with content-addressed lineage as the only
side-channel local agents need. The energy-based inference step and the collective-inference principle
both predate us; the contribution is their synthesis under sovereignty, named and proved. Within it,
identification and confidentiality are one fact read twice. The carrier connectivity and rank coverage
that let the collective derive what no member holds are exactly what bound an adversary's reconstruction
of what a member holds privately. So the condition under which the collective realizes its answer is the
condition that keeps each member's state safe.

That is one turn, and this paper proves it: in the linear--Gaussian regime every answer the collective
derives is determined, hence equal to the centralized optimum. But a deployed mesh does not freeze. The turn is the inference
half of a learning loop (\S\ref{sec:loop}): agents apply the collective's answer, learn from the world,
and feed back, so the closure grows and collective intelligence compounds. It compounds honestly,
because information enters by acting, never from the mesh. We formalize that loop as architecture and
prove only the turn. We state its frontier here rather than defer it: when does asking improve the
collective rather than corrupt it, and can a non-linear collective be trusted to derive an upgraded
answer to a similar problem rather than a confident error. The proven turn is the foundation that
question stands on. A collective that derives what no member holds, correctly and sovereignly, is the
thing a learning loop must not corrupt. Making it learn, provably, is the program this paper opens.

\end{document}